\newtheorem{theorem}{Theorem}[section]
\newtheorem{lemma}[theorem]{Lemma}
\newtheorem{corollary}[theorem]{Corollary}
\newtheorem{proposition}[theorem]{Proposition}
\theoremstyle{definition}
\newtheorem{example}[theorem]{Example}
\newtheorem{example-proposition}[theorem]{Example-Proposition}
\newtheorem{definition}[theorem]{Definition}
\newtheorem{definition-lemma}[theorem]{Definition-Lemma}
\newtheorem{definition-theorem}[theorem]{Definition-Theorem}
\newtheorem{remark}[theorem]{Remark}
\newtheorem*{ack}{Acknowledgements}
\newtheorem*{convention}{Convention}
\newcommand{\ac}{\textup{!`}}
\numberwithin{equation}{section}
\begin{document}

\title{Poisson cohomology, Koszul duality, and Batalin-Vilkovisky algebras}

\author[1]{Xiaojun Chen\thanks{Email: xjchen@scu.edu.cn}}
\author[2]{Youming Chen\thanks{Email: youmingchen@cqut.edu.cn}}
\author[1]{Farkhod Eshmatov\thanks{Email: olimjon55@hotmail.com}}
\author[3]{Song Yang\thanks{Email: syangmath@tju.edu.cn}}

\renewcommand\Affilfont{\small}

\affil[1]{School of Mathematics, Sichuan University, Chengdu 610064, P. R. China}
\affil[2]{School of Science, Chongqing University of Technology, Chongqing 400054, P. R. China}
\affil[3]{Center for Applied Mathematics, Tianjin University, Tianjin 300072, P. R. China}

\date{}

\maketitle

\begin{abstract}

We study the noncommutative Poincar\'e duality
between the Poisson homology and cohomology of unimodular Poisson
algebras, and show that Kontsevich's deformation quantization as well as
Koszul duality preserve the corresponding Poincar\'e duality.
As a corollary, the Batalin-Vilkovisky algebra structures that naturally arise
in these cases are all isomorphic.

\noindent{\bf Keywords:} Koszul duality, deformation quantization, unimodular, Calabi-Yau


\end{abstract}



\section{Introduction}\label{sect:Intro}

In this paper we study the noncommutative Poincar\'e duality
between the Poisson homology and cohomology of unimodular Poisson
algebras, and show that Kontsevich's deformation quantization as well as
Koszul duality preserve the corresponding Poincar\'e duality.

Let $A=\mathbb R[x_1,\cdots, x_n]$ be the real polynomial algebra in $n$ variables.
A Poisson bivector on $A$, say $\pi$, is called {\it quadratic} if it is in the form
\begin{equation}\label{formula:quadratic_Poisson}
\pi=\sum_{i_i,i_2,j_1,j_2}c_{i_1i_2}^{j_1j_2}x_{i_1}x_{i_2}\frac{\partial}{\partial x_{j_1}}\wedge\frac{\partial}{\partial x_{j_2}},
\quad c_{i_1i_2}^{j_1j_2}\in\mathbb R.
\end{equation}
Several years ago, Shoikhet \cite{Shoikhet} observed
that if $\pi$ is quadratic,
then the Koszul dual algebra $A^!$ of $A$, namely,
the graded symmetric algebra $\mathbf\Lambda(\xi_1,\cdots,\xi_n)$
generated by $n$ elements of degree $-1$,
has a Poisson structure (let us call it the {\it Koszul dual} of $\pi$), given by
\begin{equation}\label{corresp:PP}
\pi^!=\sum_{i_1,i_2,j_1,j_2}c_{i_1i_2}^{j_1j_2}\xi_{j_1}\xi_{j_2}\frac{\partial}{\partial \xi_{i_1}}\wedge\frac{\partial}{\partial \xi_{i_2}},
\end{equation}
and proved that Kontsevich's deformation quantization
preserves this type of Koszul duality.
Shoikhet's result motivates us to study some other properties of a
Poisson algebra under
Koszul duality.

First, the following theorem is
clear from Shoikhet's article,
once we explicitly write down the corresponding
complexes.

\begin{theorem}
\label{thm:firsttheorem}
Let $A=\mathbb R[x_1,\cdots, x_n]$ be a quadratic Poisson algebra.
Denote by $A^{!}$ the Koszul dual Poisson algebra of $A$.
Then there are isomorphisms
\begin{equation}\label{iso:HP}
\mathrm{HP}_\bullet(A)\cong\mathrm{HP}^{-\bullet}(A^!; A^{\ac})\quad
\mbox{and}\quad
\mathrm{HP}^\bullet(A)\cong\mathrm{HP}^{\bullet}(A^!),
\end{equation}
where $A^{\ac}:=\mathrm{Hom}_{\mathbb{R}}(A^{!}, \mathbb{R})$ is the linear dual of $A^{!}$.
\end{theorem}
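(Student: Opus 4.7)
The plan is to make both isomorphisms manifest at the level of explicit cochain complexes, using the Koszul-duality substitution
\[
x_i \; \longleftrightarrow \; \frac{\partial}{\partial\xi_i}, \qquad \frac{\partial}{\partial x_j} \; \longleftrightarrow \; \xi_j.
\]
The formulas \eqref{formula:quadratic_Poisson} and \eqref{corresp:PP} are written to make this substitution send $\pi$ to $\pi^{!}$, and the theorem asserts that the same substitution also identifies the (co)chain complexes computing the Poisson (co)homologies.

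For the second isomorphism $\mathrm{HP}^{\bullet}(A)\cong \mathrm{HP}^{\bullet}(A^{!})$, I would first write the Poisson cochain complex of $A$ as the polyvector-field algebra $A \otimes \mathbf{\Lambda}(\partial/\partial x_1,\ldots,\partial/\partial x_n)$ with Schouten differential $[\pi,-]$. For the graded-commutative algebra $A^{!}$, the analogous polyvector-field complex is $A^{!}\otimes \mathbb{R}[\partial/\partial\xi_1,\ldots,\partial/\partial\xi_n]$ with differential $[\pi^{!},-]$; the outer factor is a polynomial algebra because each $\partial/\partial\xi_i$ is odd and the graded wedge of odd derivations is graded-symmetric. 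The substitution above extends to a vector-space bijection $\Phi$ between the two algebras, and by \eqref{formula:quadratic_Poisson}--\eqref{corresp:PP} it carries $\pi$ to $\pi^{!}$ after a reordering that is sign-free (two odd factors are moved past two odd factors). Hence $\Phi$ intertwines the Schouten differentials.

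For the first isomorphism $\mathrm{HP}_{\bullet}(A)\cong \mathrm{HP}^{-\bullet}(A^{!};A^{\ac})$, I would take the Poisson chain complex of $A$ to be $(\Omega^{\bullet}_{A},L_{\pi}) = (A\otimes \mathbf{\Lambda}(dx_1,\ldots,dx_n), L_{\pi})$ with Brylinski's differential $L_{\pi}=[d,\iota_{\pi}]$. Since $A^{!}$ is finite-dimensional, the linear dual $A^{\ac}$ can be identified as a vector space with the exterior algebra $\mathbf{\Lambda}(\xi^{*}_1,\ldots,\xi^{*}_n)$ on the dual generators (placed in degree $+1$), and the complex computing $\mathrm{HP}^{\bullet}(A^{!};A^{\ac})$ is $\mathbf{\Lambda}(\xi^{*}_1,\ldots,\xi^{*}_n)\otimes\mathbb{R}[\partial/\partial\xi_1,\ldots,\partial/\partial\xi_n]$. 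The map $\Psi$ extending the Koszul substitution by $dx_j \mapsto \xi^{*}_j$ is then a vector-space bijection between $\Omega^{\bullet}_{A}$ and this complex; the chain degree $k$ on the left (count of $dx$'s) corresponds to $k$ dual exterior factors on the right, which accounts for the sign-flip $\bullet \mapsto -\bullet$ in the statement.

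The main obstacle is sign bookkeeping: one has to verify that (i) under $\Phi$ the Schouten bracket $[\pi,-]$ on $A$ corresponds to $[\pi^{!},-]$ on $A^{!}$, and (ii) under $\Psi$ Brylinski's operator $L_{\pi}$ on $\Omega^{\bullet}_{A}$ corresponds to the $A^{\ac}$-valued Poisson coboundary on the Koszul-dual side. Both verifications are mechanical once one has fixed the sign conventions for graded commutativity in $A^{!}$ and for polyvectors on a graded Poisson algebra, because in every case the differentials are first-order operators whose action on generators is dictated by the same combinatorial data $c^{j_1 j_2}_{i_1 i_2}$. The resulting cochain isomorphisms $\Phi$ and $\Psi$ then induce the two isomorphisms \eqref{iso:HP}.
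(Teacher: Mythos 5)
Your proposal follows essentially the same route as the paper's proof: both identify $\Omega^\bullet(A)$, $\mathrm{CP}^\bullet(A)$, $\mathfrak X^\bullet_{A^!}(A^{\ac})$ and $\mathrm{CP}^\bullet(A^!)$ as explicit graded symmetric algebras and use the Koszul substitution $x_i\mapsto \partial/\partial\xi_i$, $dx_i\mapsto\xi_i^*$, $\partial/\partial x_i\mapsto\xi_i$ to produce chain-level isomorphisms, leaving the compatibility with the differentials as a direct (sign) check. The only cosmetic difference is that you phrase the boundary on $\Omega^\bullet(A)$ as Brylinski's $L_\pi=[d,\iota_\pi]$ while the paper uses Koszul's explicit formula, which define the same operator.
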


In the above theorem, $\mathrm{HP}_\bullet(-)$ is the Poisson homology,
$\mathrm{HP}^\bullet(-)$ is the Poisson cohomology,
and $\mathrm{HP}^\bullet(A^!;A^{\ac})$ is the Poisson cohomology of $A^!$ with values
in its dual space.

Historically, the Poisson homology and cohomology were introduced by Koszul \cite{Koszul} and
Lichnerowicz \cite{Lichnerowicz}
respectively. In 1997 Weinstein \cite{Weinstein}
introduced the notion of {\it unimodular} Poisson manifolds,
and two years later Xu \cite{Xu} proved that in this case,
there is a Poincar\'e duality between the Poisson
cohomology and homology of $M$.
A purely algebraic version of Weinstein's notion was later formulated
by Dolgushev in \cite{Dolgushev} (see also \cite{LR07,LWW}),
and in this case we also
have
\begin{equation}\label{iso:PD_XuDolgushev}
\mathrm{HP}^\bullet(A)\cong\mathrm{HP}_{n-\bullet}(A),
\end{equation}
for some $n$ depending on $A$.

For a {\it finite dimensional} algebra such as $A^!$ above,
Zhu, Van Oystaeyen and Zhang
introduced in \cite{ZVOZ} the notion of {\it Frobenius Poisson algebra},
and proved that if they
are {\it unimodular} in some sense (to be recalled below),
then there also exists a version of Poincar\'e duality:
\begin{equation}\label{iso:PD_ZVOZ}
\mathrm{HP}^\bullet(A^!)\cong\mathrm{HP}^{\bullet-n}(A^!; A^{\ac}).
\end{equation}

Combining the above two versions of Poincar\'e duality \eqref{iso:PD_XuDolgushev} and
\eqref{iso:PD_ZVOZ}
as well as Theorem \ref{thm:firsttheorem},
we have the following:

\begin{theorem}
\label{thm:secondtheorem}
Let $A=\mathbb R[x_1,\cdots,x_n]$ be a quadratic Poisson algebra.
Then $(A,\pi)$
is unimodular if and only if its Koszul dual $(A^!, \pi^!)$ is unimodular Frobenius.
In this case, we have
the following commutative diagram:
$$
\xymatrixcolsep{4pc}
\xymatrix{
\mathrm{HP}^\bullet(A)\ar[r]^-{\cong}\ar[d]^{\cong}&\mathrm{HP}_{n-\bullet}(A)\ar[d]^{\cong}\\
\mathrm{HP}^\bullet(A^!)\ar[r]^-{\cong}&\mathrm{HP}^{\bullet-n}(A^!; A^{\ac}).
}
$$
\end{theorem}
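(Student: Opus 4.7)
The argument for Theorem~\ref{thm:secondtheorem} splits into two stages: verifying the equivalence of the two unimodularity conditions, and then checking that the square commutes at the level of complexes.

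I would begin with an explicit computation, in terms of the coefficients $c_{i_1i_2}^{j_1j_2}$, of both the modular vector field $\phi_\pi$ of $(A,\pi)$ with respect to the standard volume form $dx_1\wedge\cdots\wedge dx_n$, and of the modular element of the unimodular-symmetric structure on $(A^!,\pi^!)$ associated with the canonical Frobenius pairing $\langle a,b\rangle =$ (coefficient of $\xi_1\cdots\xi_n$ in $ab$) on $A^!=\Lambda(\xi_1,\ldots,\xi_n)$. Each is a (odd or even) derivation whose coefficient depends linearly on the $c_{i_1i_2}^{j_1j_2}$, and a routine reindexing should show that the two vanishing conditions coincide. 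This yields the first assertion.

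For the diagram, my plan is to realize all four Poisson (co)homology groups via chain complexes on a common, Koszul-dual, underlying graded vector space. Writing $V=\mathrm{span}_{\mathbb{R}}\{\xi_i\}$ and $V^*=\mathrm{span}_{\mathbb{R}}\{x_i\}$, the Poisson cochain complex of $A$ is supported on $S(V^*)\otimes \Lambda V$, its chain complex on $S(V^*)\otimes \Lambda V^*$, the cochain complex of $A^!$ on $\Lambda V\otimes S(V^*)$, and the complex for $\mathrm{HP}^\bullet(A^!;A^\ac)$ on $\Lambda V^*\otimes S(V^*)$. The vertical isomorphisms of Theorem~\ref{thm:firsttheorem} are then tensor-factor permutations and intertwine the two Poisson differentials essentially by the defining property of $\pi^!$. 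With this setup, the Xu--Dolgushev map across the top is, at cochain level, contraction of the polyvector factor $\Lambda V$ with $dx_1\wedge\cdots\wedge dx_n$, namely the Hodge isomorphism $\Lambda^k V\to\Lambda^{n-k}V^*$. The Zhu--Van Oystaeyen--Zhang map across the bottom is induced by the Frobenius pairing on $A^!$ and on the $\Lambda V$ factor is the \emph{same} Hodge isomorphism. Commutativity of the square therefore reduces to a direct sign-check.

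The principal difficulty I anticipate is the careful bookkeeping of signs and degree shifts in the Koszul-dual identification. The two Poincar\'e dualities are defined by a priori different mechanisms: contraction with a volume form governed by Brylinski's operator on the $A$ side, and the Frobenius trace together with the dual Poisson differential on the $A^!$ side. Making these coincide on the nose requires verifying that the modular correction which governs compatibility of each with its respective Poisson differential has the same origin under the Koszul identification. This is precisely what the first part of the theorem, the unimodularity equivalence, guarantees: the modular vector field that measures the failure of the Xu--Dolgushev contraction to intertwine the differentials is translated, under Koszul duality, into the Zhu--Van Oystaeyen--Zhang modular element, so both corrections vanish simultaneously and the square commutes on the nose.
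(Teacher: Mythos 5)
Your proposal is correct and follows essentially the same route as the paper: all four complexes are identified on Koszul-dual generators via $x_i\mapsto\partial/\partial\xi_i$, $dx_i\mapsto\xi_i^*$, $\partial/\partial x_i\mapsto\xi_i$, the volume forms $dx_1\cdots dx_n$ and $\xi_1^*\cdots\xi_n^*$ correspond, and both the unimodularity equivalence and the commutativity of the square are read off from the resulting chain-level diagram. The only cosmetic difference is that you first propose an explicit coefficient computation of the modular vector field and modular element, whereas the paper skips this and deduces the equivalence directly from the intertwining of the two contraction maps --- exactly the observation you arrive at in your closing paragraph.
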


The main technique to prove the above theorem is the so-called ``differential calculus",
a notion introduced by Tamarkin and Tsygan in \cite{TT05}. Later, Lambre \cite{Lambre}
used the terminology ``differential calculus with duality"
to study the ``noncommutative Poincar\'e duality" in these cases.

In the above-mentioned two references \cite{Xu,ZVOZ},
the authors also proved that
the Poisson cohomology of a unimodular Poisson algebra
(in both cases)
has a Batalin-Vilkovisky algebra structure.
The Batalin-Vilkovisky structure is a very important algebraic structure that has
appeared in, for example, mathematical physics, Calabi-Yau geometry and string topology.
For unimodular quadratic Poisson algebras, we have the following:

\begin{theorem}
\label{thm:thirdtheorem}
Suppose $A=\mathbb R[x_1,\cdots, x_n]$ is a unimodular quadratic Poisson algebra.
Denote by $A^{!}$ its Koszul dual.
Then
$$
\mathrm{HP}^\bullet(A)\cong\mathrm{HP}^\bullet(A^!)
$$
is an isomorphism
of Batalin-Vilkovisky algebras.
\end{theorem}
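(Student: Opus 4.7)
I would package the BV structures on both sides into the framework of a \emph{differential calculus with duality} in the sense of Tamarkin-Tsygan \cite{TT05} and Lambre \cite{Lambre}: a Gerstenhaber algebra $V$ together with a compatible ``form'' module $\Omega$ carrying a Connes-type boundary $B$, plus a Poincar\'e duality $V^{\bullet}\cong\Omega^{n-\bullet}$, automatically produces a BV structure on $V$ by transporting $B$ through the duality. On the $A$-side the calculus is the standard polyvector-form one, with $B=d_{dR}$ and form-differential $\partial=L_{\pi}=[d_{dR},\iota_{\pi}]$; unimodularity of $(A,\pi)$ ensures that contraction with a volume form is an isomorphism of calculi, producing the BV operator $\Delta$ on $\mathrm{HP}^{\bullet}(A)$ that underlies the left column of Theorem \ref{thm:secondtheorem}. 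On the $A^{!}$-side, \cite{ZVOZ} provides the parallel calculus, now with ``forms'' living in $C^{\bullet}(A^{!};A^{\ac})$ and duality induced by the Frobenius pairing, yielding the BV operator $\Delta^{!}$ on $\mathrm{HP}^{\bullet}(A^{!})$ that underlies the right column.

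\textbf{Main steps.} At the chain level, the Koszul duality \eqref{corresp:PP} provides explicit bijections between generators of $\wedge^{\bullet} T_{\mathrm{poly}}(A)$ and of $\wedge^{\bullet}T_{\mathrm{poly}}(A^{!})$, and dually between K\"ahler forms on $A$ and cochains in $C^{\bullet}(A^{!};A^{\ac})$. My first task would be to verify, essentially as in Theorem \ref{thm:firsttheorem}, that these bijections intertwine the Poisson cohomology differentials $d_{\pi}$ and $d_{\pi^{!}}$, together with the wedge product and Schouten bracket, making both vertical arrows of Theorem \ref{thm:secondtheorem} into isomorphisms of Gerstenhaber algebras and of ``form'' modules. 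My second task would be to check that the same bijections also intertwine the remaining calculus operations---contraction $\iota$, Lie derivative $L$, de Rham differential $d_{dR}$, and hence the Koszul-Brylinski boundary $\partial$---with their $A^{!}$-side counterparts. Once both compatibilities are in place, the vertical arrows of Theorem \ref{thm:secondtheorem} become isomorphisms of differential calculi with duality, so $\Delta$ and $\Delta^{!}$ correspond; combined with the horizontal Poincar\'e dualities, this forces the top horizontal isomorphism to be a morphism of BV algebras.

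\textbf{Main obstacle.} The compatibility of the Gerstenhaber structures under Koszul duality, together with the module structure of forms over polyvectors, is by now essentially understood and largely contained in \cite{Shoikhet}. The genuinely delicate part is matching the mixed-complex structure on the ``form'' side: on $A$, $d_{dR}$ is the classical de Rham operator in the commuting variables $x_{i}$, whereas on the $A^{!}$-side the analogous operator inside $C^{\bullet}(A^{!};A^{\ac})$ is a Hochschild-type coboundary built from the graded-symmetric generators $\xi_{i}$. Tracking the signs carefully to confirm that these operators correspond under the proposed identification---and hence that $\partial=[d_{dR},\iota_{\pi}]$ on the $A$-side matches the Zhu-Van Oystaeyen-Zhang boundary on the $A^{!}$-side---is where I expect the real technical work to lie; once this is done, the BV isomorphism follows formally from the commutative square of Theorem \ref{thm:secondtheorem}.
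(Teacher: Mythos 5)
Your proposal is correct and follows essentially the same route as the paper: both sides are organized as differential calculi with duality in the Tamarkin--Tsygan/Lambre sense, Lambre's theorem produces the BV operators as $\mathrm{PD}\circ d\circ\mathrm{PD}^{-1}$, and the chain-level Koszul-duality identification of Theorem \ref{thm:firsttheorem} is checked to preserve the calculus operations, the (dual) de Rham differential, and the volume form, so the BV structures correspond. The only minor quibble is terminological: the operator on the $A^{!}$-side is the dual $d^{*}$ of the de Rham differential on $\Omega^{\bullet}(A^{!})$ rather than a Hochschild-type coboundary, but this does not affect the argument.
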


The above three theorems
have some analogy to
the case of
Calabi-Yau algebras, which
were introduced by Ginzburg \cite{Ginzburg} in 2006.
Suppose a Calabi-Yau algebra, say $A$, is Koszul, then its Koszul dual,
denoted by $A^!$,
is a symmetric Frobenius algebra.
For these two algebras, we also have a version of Poincar\'e duality,
due to Van den Bergh \cite{VdB98} and Tradler \cite{Tradler} respectively
(compare with \eqref{iso:PD_XuDolgushev} and \eqref{iso:PD_ZVOZ}):
$$
\mathrm{HH}^\bullet(A)\cong\mathrm{HH}_{n-\bullet}(A),\quad
\mbox{and}\quad
\mathrm{HH}^\bullet(A^!)\cong\mathrm{HH}^{\bullet-n}(A^!, A^{\ac}).
$$

In \cite[\S 5.4]{Ginzburg} Ginzburg stated a conjecture, which
he attributed to R. Rouquier, saying that for a Koszul Calabi-Yau algebra,
say $A$,
its Hochschild cohomology is isomorphic to the Hochschild cohomology of its Koszul dual $A^!$
\begin{equation}\label{RouquiersconjectureonKCY}
\mathrm{HH}^\bullet(A)\cong\mathrm{HH}^\bullet(A^!)
\end{equation}
as Batalin-Vilkovisky algebras. This conjecture is recently proved by two authors
of the current paper together with G. Zhou in \cite{CYZ}.
In fact, Theorem \ref{thm:thirdtheorem}
may be viewed as a generalization of Rouquier's conjecture in Poisson geometry,
which has been a folklore for several years.

More than just being an analogy, in \cite[Theorem 3]{Dolgushev}, Dolgushev proved that
for the coordinate ring $A$ of an affine Calabi-Yau Poisson variety,
its deformation quantization in the sense of Kontsevich, say $A_{\hbar}$,
is Calabi-Yau if and only if $A$ is unimodular.
Similarly Felder and Shoikhet (\cite{FS}) and later Willwacher and Calaque
(\cite{WC}) proved that,
for a Frobenius Poisson algebra, its deformation quantization
is again symmetric Frobenius if and only if it is unimodular.
Based on these results,
Dolgushev asked two questions in \cite[\S7]{Dolgushev} (see also \cite{DTT}).
The first question is
whether there exists a relationship between the Poincar\'e duality of the
Poisson (co)homology of $A$
and the Poincar\'e duality
of the Hochschild (co)homology of $A_\hbar$.
The following theorem answers this question in the case of
polynomials (the second half also includes the case of Frobenius algebras):

\begin{theorem}
\label{thm:fifttheorem}
$(1)$ Suppose $A=\mathbb R[x_1,\cdots, x_n]$
is a unimodular Poisson algebra. Let $A_{\hbar}$ be its
deformation quantization.
Then the following diagram
$$
\xymatrixcolsep{4pc}
\xymatrix{
\mathrm{HP}^\bullet(A[\![\hbar]\!])\ar[r]^-{\cong}\ar[d]^{\cong}&\mathrm{HP}_{n-\bullet}(A[\![\hbar]\!])
\ar[d]^{\cong}
\\
\mathrm{HH}^\bullet(A_{\hbar})\ar[r]^-{\cong}&\mathrm{HH}_{n-\bullet}(A_{\hbar})}
$$
commutes.

$(2)$ Similarly,
suppose $A^!=\mathbf\Lambda(\xi_1,\cdots,\xi_n)$
is a unimodular Frobenius Poisson algebra, and let $A^!_{\hbar}$ be its
deformation quantization.
Then the following diagram
$$
\xymatrixcolsep{4pc}
\xymatrix{\mathrm{HP}^\bullet(A^![\![\hbar]\!])\ar[r]^-{\cong}\ar[d]^{\cong}
&\mathrm{HP}^{\bullet-n}(A^![\![\hbar]\!]; A^{\ac}[\![\hbar]\!])\ar[d]^{\cong}\\
\mathrm{HH}^\bullet(A^!_{\hbar})\ar[r]^-{\cong}&\mathrm{HH}^{\bullet-n}(A^!_{\hbar}; A^{\ac}_{\hbar})
}
$$
commutes.
\end{theorem}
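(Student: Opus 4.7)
The plan is to cast both diagrams as morphisms of \emph{differential calculi with duality} in the sense of Lambre, and to show that the Kontsevich--Shoikhet formality morphisms provide a quasi-isomorphism of such calculi. Granting this, the theorem reduces, in each case, to checking that the volume class implementing Poincar\'e duality on the Poisson side is carried by the formality to the fundamental class implementing Poincar\'e duality on the Hochschild side.

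For part (1), I would first recall that the pair $\bigl(T^\bullet_{\mathrm{poly}}(A)[\![\hbar]\!],\,\Omega^\bullet(A)[\![\hbar]\!]\bigr)$, endowed with the Schouten bracket, wedge product, contraction $\iota$ and Lie derivative $L$, is a Tamarkin--Tsygan calculus; twisting its differentials by the Maurer--Cartan element $\hbar\pi$ yields the Poisson (co)chain complexes computing $\mathrm{HP}^\bullet(A[\![\hbar]\!])$ and $\mathrm{HP}_\bullet(A[\![\hbar]\!])$. The top horizontal arrow of the square is Xu's Poincar\'e duality, which for a unimodular polynomial Poisson algebra is contraction with $\omega := dx_1\wedge\cdots\wedge dx_n$, a class in $\mathrm{HP}_n(A[\![\hbar]\!])$. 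On the Hochschild side, $(C^\bullet(A_\hbar), C_\bullet(A_\hbar))$ with cup/cap products, Gerstenhaber bracket and Connes $B$-operator is also a calculus; since $A$ is unimodular, Dolgushev's theorem ensures $A_\hbar$ is Calabi--Yau of dimension $n$, so van den Bergh duality yields the bottom horizontal arrow as contraction with a fundamental class $[\omega_\hbar]\in\mathrm{HH}_n(A_\hbar)$.

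The two vertical arrows are Kontsevich's formality quasi-isomorphism $\mathcal U$ on the cohomology side and Shoikhet's formality quasi-isomorphism $\mathcal S$ on the homology side. The key input is the theorem of Dolgushev--Tamarkin--Tsygan that, after twisting by $\hbar\pi$, the pair $(\mathcal U, \mathcal S)$ is a quasi-isomorphism of calculi, i.e.\ it intertwines the action of polyvector fields on polyforms with the action of Hochschild cochains on Hochschild chains up to coherent homotopies. Given this, the commutativity of the square reduces to showing that $\mathcal S(\omega) = [\omega_\hbar]$ up to a nonzero scalar in $\mathrm{HH}_n(A_\hbar)$. This is the main obstacle. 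I would argue it by combining two observations: (i) the zeroth-order component of $\mathcal S$ is the Hochschild--Kostant--Rosenberg map, so $\mathcal S(\omega)=\omega+O(\hbar)$; and (ii) the top Hochschild homology of $A_\hbar$ is one-dimensional with generator characterized (up to scalar) by being a cycle for $b+\hbar L_\pi$, so the higher-order $\hbar$-corrections to $\mathcal S(\omega)$ are forced to land in the same class as $[\omega_\hbar]$. Alternatively one can invoke the compatibility of $\mathcal S$ with Felder--Shoikhet's trace densities to identify the two fundamental classes directly.

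Part (2) follows the same scheme with the appropriate substitutions: the Kontsevich--Shoikhet formality is replaced by the cyclic formality of Willwacher--Calaque, the Calabi--Yau structure on $A_\hbar$ is replaced by the symmetric algebra structure on $A^!_\hbar$ guaranteed by Felder--Shoikhet when $A^!$ is unimodular symmetric, and Xu's duality is replaced by the Zhu--Van Oystaeyen--Zhang duality recalled in \eqref{iso:PD_ZVOZ}. The identification of the two Frobenius/symmetrizing trace classes---one on $A^!$ and one on $A^!_\hbar$---is again the principal technical step, and is handled by the same HKR leading-term argument combined with the one-dimensionality of the top Hochschild homology of a symmetric algebra. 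In both parts the formality quasi-isomorphism of calculi then automatically makes the square commute.
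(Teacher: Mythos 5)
Your proposal follows essentially the same route as the paper: both squares are treated as morphisms of Tamarkin--Tsygan differential calculi with duality, the vertical arrows come from the Kontsevich/Shoikhet (resp.\ Willwacher--Calaque) formality morphisms upgraded to calculus quasi-isomorphisms via Dolgushev--Tamarkin--Tsygan, and commutativity is reduced to matching the volume classes, which the paper also settles by noting the fundamental class is unique up to scalar in the (rank-one) top homology. Your HKR leading-term argument just makes explicit the parenthetical remark the paper relies on at that step.
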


In other words, the two versions of Poincar\'e duality, one between
the Poisson cohomology and homology, and the other between the Hochschild cohomology
and homology, are preserved under Kontsevich's deformation quantization.

The second question that Dolgushev asked is whether there is any relationship between
the roles that the unimodularity plays
in the above two types of deformation quantizations.
The following theorem partially answers this question, although both cases that Dolgushev
and Felder-Shoikhet/Willwacher-Calaque considered are more general (i.e., not necessarily Koszul):

\begin{theorem}
\label{thm:fourththeorem}
Suppose $A=\mathbb R[x_1,\cdots, x_n]$ is a
quadratic Poisson algebra.
Denote by $A^!$ the Koszul dual algebra of $A$,
and by $A_{\hbar}$ and $A^!_{\hbar}$ the Kontsevich
deformation quantization of $A$ and $A^!$ respectively.
If $A$ is unimodular (and by Theorem \ref{thm:secondtheorem} $A^!$ is unimodular Frobenius), then
$A_{\hbar}$ is Calabi-Yau and $A^!_{\hbar}$ is symmetric Frobenius,
and
the following diagram
\begin{equation}
\xymatrixcolsep{5pc}
\xymatrix{
\mathrm{HP}^\bullet(A[\![\hbar]\!])\ar[r]^-{\cong}\ar[d]^{\cong}
&\mathrm{HP}^{\bullet}(A^![\![\hbar]\!])\ar[d]^{\cong}\\
\mathrm{HH}^\bullet(A_{\hbar})\ar[r]^-{\cong}&\mathrm{HH}^{\bullet}(A^!_\hbar).
}
\end{equation}
is commutative as Batalin-Vilkovisky algebra isomorphisms,
where $A[\![\hbar]\!]$ and $A^![\![\hbar]\!]$ are equipped with the Poisson bivectors $\hbar\pi$ and
$\hbar\pi^!$ respectively.
\end{theorem}

In other words, the theorem says that,
the unimodularity that appears in the deformation
quantization of Calabi-Yau Poisson algebras and
Frobenius Poisson algebras are related by Koszul duality.
Note that in the theorem,
$A_{\hbar}$ and $A^!_{\hbar}$ are Koszul dual to each other
by Shoikhet \cite{Shoikhet}.

Thus as a corollary, one obtains that if $A=\mathbb R[x_1,\cdots, x_n]$ is a unimodular quadratic Poisson algebra,
then the homology and cohomology groups (Poisson and Hochschild)
in Theorems \ref{thm:fifttheorem}
and \ref{thm:fourththeorem} are all isomorphic.
That is, we have the following commutative diagram
of isomorphisms:
$$
\xymatrixrowsep{1pc}
\xymatrixcolsep{1pc}
\xymatrix{
&\mathrm{HP}^\bullet(A^![\![\hbar]\!])\ar[rr] \ar[dd]
&&\mathrm{HP}^{\bullet-n}(A^![\![\hbar]\!]; A^{\ac}[\![\hbar]\!])\ar[dd]\\
\mathrm{HP}^\bullet(A[\![\hbar]\!])\ar[rr] \ar[dd] \ar[ur]&&\mathrm{HP}_{n-\bullet}(A[\![\hbar]\!])\ar[dd]\ar[ur] &\\
&\mathrm{HH}^\bullet(A^!_{\hbar})\ar[rr]&&\mathrm{HH}^{\bullet-n}(A^!_{\hbar}; A^{\ac}_{\hbar})\\
\mathrm{HH}^\bullet(A_{\hbar})\ar[rr]\ar[ur] &&\mathrm{HH}_{n-\bullet}(A_{\hbar}),\ar[ur]&
}
$$
where the horizontal arrows are the Poincar\'e duality, the vertical arrows are given by deformation quantization,
and the slanted arrows are given by Koszul duality.

The rest of the paper is devoted to the proof of the above theorems.
It is organized as follows:
in \S\ref{sect:Pre_Koszul}
we collect several facts on Koszul algebras, and their application
to quadratic Poisson polynomials; in \S\ref{sect:Poissonhomology} we first recall the definition
of Poisson homology and cohomology, and then prove Theorem \ref{thm:firsttheorem};
in \S\ref{sect:unimodular_Poisson}
 we study unimodular quadratic Poisson algebras and their Koszul dual,
and prove Theorem \ref{thm:secondtheorem};
in \S\ref{sect:Iso_BV}
we prove Theorem \ref{thm:thirdtheorem} by means of the so-called ``differential calculus with duality";
in \S\ref{sect:Calabi_Yau_algebras} we discuss Calabi-Yau algebras, their Koszul duality and the Batalin-Vilkovisky
algebras associated to them;
and at last, in \S\ref{sect:connections} we discuss the
deformation quantization of Poisson algebras and prove Theorems \ref{thm:fifttheorem}
and \ref{thm:fourththeorem}.

\begin{ack}
This work is inspired by several interesting conversations of the authors
with P. Smith, S.-Q. Wang and C. Zhu,
to whom we express our gratitude,
during the Non-commutative Algebraic Geometry Workshop 2014 held at Fudan University.
It is partially supported by NSFC (No. 11271269, 11671281)
and RFDP (No. 20120181120090).
\end{ack}

\begin{convention}
Throughout the paper, $k$ is a
field of characteristic zero, which we may assume to be $\mathbb R$ as in \S\ref{sect:Intro}.
All tensors and morphisms are graded over $k$
unless otherwise specified.
For a chain complex, its homology is denoted by $\mathrm{H}_\bullet(-)$, and its cohomology is
$\mathrm{H}^\bullet(-):=\mathrm{H}_{-\bullet}(-)$.
\end{convention}

\section{Preliminaries on Koszul algebras}\label{sect:Pre_Koszul}

In this section, we collect some necessary facts about Koszul algebras.
The interested reader may refer to Loday-Vallette \cite[Chapter 3]{LV} for some more details.


Let $V$ be a finite-dimensional vector space over $k$.
Denote by $TV$ the free (tensor) algebra generated by $V$ over $k$.
Suppose $R$ is a subspace of $V\otimes V$, and let
$(R)$ be the two-sided ideal generated by $R$ in $TV$,
then the quotient algebra
$A:= TV/(R)$
is called
a {\it quadratic algebra}.

Consider the subspace
$$U=\bigoplus_{n=0}^\infty U_n:=
\bigoplus_{n=0}^\infty \bigcap_{i+j+2=n}V^{\otimes i}\otimes R\otimes
V^{\otimes j}$$
of $TV$,
then $U$ is a coalgebra whose coproduct is induced from
the de-concatenation of the tensor products.
The {\it Koszul dual coalgebra} of $A$, denoted
by $A^{\ac}$, is
$$
A^{\ac}=\bigoplus_{n=0}^\infty \Sigma^{\otimes n} (U_n),
$$
where $\Sigma$ is the degree shifting-up (suspension) functor.
$A^{\ac}$ has a graded coalgebra structure induced from that of $U$ with
$$
(A^{\ac})_0=k, \quad (A^{\ac})_1=\Sigma V, \quad (A^{\ac})_2=(\Sigma\otimes\Sigma)(R),\quad\cdots\cdots
$$

The {\it Koszul dual algebra} of $A$, denoted by $A^!$,
is just the linear dual space of $A^{\ac}$, which is then a graded algebra.
More precisely,
let $V^*=\mathrm{Hom}(V, k)$ be the linear dual space of $V$,
and let $R^\perp$ denote
the space of annihilators of $R$ in $V^*\otimes V^*$.
Shift the grading of $V^*$ down by one, denoted by $\Sigma^{-1}V^*$,
then
$$
A^!=T(\Sigma^{-1}V^*)/(\Sigma^{-1}\otimes\Sigma^{-1}\circ R^{\perp}).
\footnote{
In the literature such as \cite{LV}, $A^!$ is defined to be $T(V^*)/R^\perp$, or equivalently,
$(A^!)_i\cong\Sigma^{i}\mathrm{Hom}((A^{\ac})_i, k)$ but not $\mathrm{Hom}((A^{\ac})_i, k)$.
This will cause some issues in our later calculations,
so in this paper, we take $A^!$ as given above, or equivalently $A^!=\mathrm{Hom}(A^{\ac}, k)$.}
$$

Choose a set of basis $\{e_i\}$ for $V$, and let $\{e_i^*\}$ be their duals in $V^*$.
There is a chain complex associated to $A$, called the {\it Koszul complex}:
\begin{equation}\label{Koszul_complex}
\xymatrix{
\cdots\ar[r]^-{\delta}&
A\otimes A^{\ac}_{i+1}\ar[r]^-{\delta}&
A\otimes A^{\ac}_{i}\ar[r]^-{\delta}&
\cdots\ar[r]&
A\otimes A^{\ac}_0\ar[r]^-{\delta}& k,
}
\end{equation}
where for any $r\otimes f\in A\otimes A^{\ac}$,
$\delta(r\otimes f)=\displaystyle\sum_i e_ir\otimes\Sigma^{-1}e_i^*f$.

\begin{definition}[Koszul algebra]
A quadratic algebra $A=TV/(R)$ is called {\it Koszul}
if the Koszul chain complex \eqref{Koszul_complex} is acyclic.
\end{definition}

\begin{example}[Polynomials]\label{Ex:polynomial}
Let $A=k[x_1, x_2,\cdots, x_n]$ be the space of polynomials (the symmetric tensor algebra)
with $n$ generators.
Then
$A$ is a Koszul algebra, and its Koszul dual algebra $A^!$ is
the graded symmetric algebra $\mathbf \Lambda(\xi_1,\xi_2,\cdots,\xi_n)$, with grading
$|\xi_i|=-1$.
\end{example}


\begin{lemma}[Shoikhet \cite{Shoikhet}]\label{thm:Shoikhet}
Let $A=k[x_1,\cdots, x_n]$ with a bivector $\pi$ in the form \eqref{formula:quadratic_Poisson}.
Then
$(A, \pi)$ is quadratic Poisson if and only if
$
(A^!,\pi^!)
$
is quadratic Poisson, where $\pi^!$ is given by \eqref{corresp:PP}.
\end{lemma}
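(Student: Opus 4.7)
The plan is to verify the statement by direct computation of the Jacobi identity $[\pi,\pi]_{\mathrm{SN}} = 0$ (Schouten--Nijenhuis bracket) on both sides, by translating each into a cubic tensor equation on the structure constants $c_{i_1 i_2}^{j_1 j_2}$ and observing that the two equations agree.

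Without loss of generality I may assume $c_{i_1 i_2}^{j_1 j_2}$ is symmetric in the lower pair $(i_1,i_2)$ (from the commuting factor $x_{i_1}x_{i_2}$) and antisymmetric in the upper pair $(j_1,j_2)$ (from $\partial_{j_1}\wedge\partial_{j_2}$); writing $V = \mathrm{span}(x_1,\ldots,x_n)$, this means $c \in \mathrm{Sym}^2(V^*) \otimes \Lambda^2(V)$. Crucially, in the formula \eqref{corresp:PP} for $\pi^!$, the factor $\xi_{j_1}\xi_{j_2}$ is graded antisymmetric (reproducing $\Lambda^2(V)$ after dualizing the upper indices) while the odd derivations $\partial_{\xi_{i_1}}\wedge\partial_{\xi_{i_2}}$ appear graded symmetrically (reproducing $\mathrm{Sym}^2(V^*)$), so $\pi^!$ is parametrized by the very same tensor $c$ via a dual parametrization. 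I then compute $\{x_a,\{x_b,x_c\}\}$ via the Leibniz rule, extract the coefficients of $x_p x_q x_r$, and write the Jacobi identity for $\pi$ as a cubic equation of the form
$$\mathrm{Sym}_{pqr}\,\mathrm{Cyc}_{abc}\sum_i c_{ip}^{bc}\, c_{qr}^{ai} = 0.$$
In parallel, I compute $\{\xi_a,\{\xi_b,\xi_c\}^!\}^!$ using the graded Leibniz rule (with Koszul sign $(-1)^{|\xi_a||\xi_j|} = -1$), extract the coefficients of $\xi_p \xi_q \xi_r$, and obtain the Jacobi identity for $\pi^!$ as
$$\mathrm{AntiSym}_{pqr}\,\mathrm{Cyc}_{abc}\sum_j c_{bc}^{jp}\, c_{aj}^{qr} = 0.$$
By exploiting the sym/antisym structure of $c$ in each of its index pairs, both quadratic contractions $c \cdot c$, paired along the canonical $V \otimes V^*$, become the same element of $\mathrm{Sym}^3(V^*) \otimes \Lambda^3(V)$, yielding the same constraint on $c$.

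The main obstacle is the careful bookkeeping of signs in the graded computation for $A^!$: since $|\xi_i| = -1$, every rearrangement of generators produces a Koszul sign, and one must verify that the cubic identity produced agrees with the one from $A$ on the nose (not merely up to an overall sign). Organizing the calculation directly at the level of the tensor $c \in \mathrm{Sym}^2(V^*) \otimes \Lambda^2(V)$, and translating both $[\pi,\pi]_{\mathrm{SN}}$ and $[\pi^!,\pi^!]_{\mathrm{SN}}$ into the same cubic tensor in $\mathrm{Sym}^3(V^*) \otimes \Lambda^3(V)$, makes these signs cancel transparently; this step also reveals that the coincidence of the two constraints is a manifestation of the Koszul self-duality of the Poisson operad.
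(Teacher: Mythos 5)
The paper offers no proof of this lemma at all --- it is imported verbatim from Shoikhet --- so there is no internal argument to measure yours against; the only meaningful comparison is with Shoikhet's original proof. Your plan is essentially that proof specialized to bivectors: both $\pi$ and $\pi^!$ are encoded by a single tensor $c$ of symmetry type $\mathrm{Sym}^2\otimes\mathbf\Lambda^2$ (symmetric in the indices paired with the commuting monomial, antisymmetric in those paired with the wedge of derivations --- and you correctly note that on the $A^!$ side these two roles are exchanged because the $\xi_i$ are odd), and the point is that $[\pi,\pi]_{\mathrm{SN}}$ and $[\pi^!,\pi^!]_{\mathrm{SN}}$ are computed by the same contraction of $c$ with itself, hence vanish together. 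Shoikhet sets this up once for the whole algebra of quadratic polyvector fields on $V$, identified Schouten-compatibly with that on $\Sigma^{-1}V^*$; your version needs only the bivector-with-itself instance, which is a legitimate economy.

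Two remarks on the execution. First, the ``main obstacle'' you identify --- that the two cubic identities must agree \emph{on the nose} rather than up to an overall sign --- is a non-issue: the lemma is an equivalence of vanishing conditions, so any global nonzero scalar relating the two tensors is harmless. What genuinely needs checking is that the \emph{relative} signs inside the graded cyclic Jacobi sum are uniform; for three generators of degree $-1$ the graded Jacobi identity $(-1)^{|a||c|}\{a,\{b,c\}\}+\mathrm{cyc}=0$ has all three prefactors equal to $-1$, so it reduces to the plain cyclic sum up to a global sign. That one-line observation is what actually makes your plan close, and it should replace the appeal to signs cancelling ``transparently.'' Second, the decisive step --- that $\sum_m c_{am}^{jk}c_{pq}^{im}$ and $\sum_m c_{ab}^{mk}c_{cm}^{pq}$, after the symmetrization over free lower indices forced by the monomials $x_px_qx_r$ (resp.\ $\xi_p\xi_q\xi_r$ on the upper indices) and the antisymmetrization on the trivector slots, give the same element of $\mathrm{Sym}^3\otimes\mathbf\Lambda^3$ --- is asserted rather than carried out. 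It is true (in both cases exactly one upper index is contracted against one lower index, and the residual symmetrizations match under relabeling), but as written the proposal announces the outcome of the computation in place of the computation. Writing out one side explicitly and exhibiting the relabeling that turns it into the other is what remains to be done to have a complete proof.
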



So far, we have assumed that $V$ is a $k$-linear space.
In \S\ref{sect:connections}, we will study the deformed algebras,
which are algebras over $k[\![\hbar]\!]$.
In \cite{Shoikhet}, Shoikhet proved that the definitions and results in above subsections
remain to hold for algebras over a discrete evaluation ring, such as $k[\![\hbar]\!]$.
For example, $k[x_1, \cdots, x_n][\![\hbar]\!]$ is Koszul dual to
$\mathbf\Lambda(\xi_1,\cdots,\xi_n)[\![\hbar]\!]$ as graded algebras over $k[\![\hbar]\!]$ (see \cite[Theorem 0.3]{Shoikhet}).

\section{Poisson homology and cohomology}\label{sect:Poissonhomology}

The notions of Poisson homology and cohomology were introduced by Koszul \cite{Koszul}
and Lichnerowicz \cite{Lichnerowicz} respectively.
Later Huebschmann \cite{Hue90} studied both of them from a purely algebraic perspective.

For a commutative algebra $A$, in the following we denote by $\Omega^p(A)$ the
set of $p$-th K\"ahler differential forms of $A$,
and by
$
\mathfrak X^{-p}(A; M)
$ 
the space of skew-symmetric multilinear maps
$A^{\otimes p}\to M$ that are derivations in each argument. In the following,
if $M=A$, we write $
\mathfrak X^{-p}(A; M)
$ simply by
$
\mathfrak X^{-p}(A)
$.
Note that from the universal property of K\"ahler differentials,
there is an identity
of $A$-modules
\begin{equation}\label{id:Kahlerdiff}
\mathfrak X^{-p}(A; M)=\mathrm{Hom}_A(\Omega^p(A), M).
\end{equation}

\begin{definition}[Koszul \cite{Koszul}]
Suppose $(A,\pi)$ is a Poisson algebra.
Then the {\it Poisson chain complex} of $A$, denoted by $\mathrm{CP}_\bullet(A)$,
is
\begin{equation}
\xymatrix{
\cdots\ar[r]&\Omega^{p+1}(A)\ar[r]^{\partial}&\Omega^{p}(A)\ar[r]^{\partial}
&\Omega^{p-1}(A)\ar[r]^{\partial}&\cdots\ar[r]&\Omega^0(A)=A,
}
\end{equation}
where $\partial$ is given by
\begin{eqnarray*}
\partial(f_0 df_1\wedge \cdots\wedge df_p)&=&\sum_{i=1}^p(-1)^{i-1}\{f_0, f_i\} df_1\wedge\cdots \widehat{df_i}\cdots \wedge df_p\\
&+&\sum_{1\le i<j\le p}(-1)^{j-i}f_0 d\{f_i,f_j\}\wedge df_1\wedge\cdots \widehat{df_i}\cdots\widehat{df_j}\cdots \wedge df_p.
\end{eqnarray*}
The associated homology is called the {\it Poisson homology} of $A$, and is denoted by
$\mathrm{HP}_\bullet(A)$.
\end{definition}

\begin{definition}[Lichnerowicz \cite{Lichnerowicz}]
Suppose $(A,\pi)$ is a Poisson algebra and $M$ is a left Poisson $A$-module.
The {\it Poisson cochain complex} of $A$ with values in $M$, denoted by $\mathrm{CP}^\bullet(A; M)$,
is the cochain complex
$$
\xymatrix{
M=\mathfrak X^0(A; M)\ar[r]^-{\delta}&\cdots\ar[r]&
\mathfrak X^{-p}(A; M)\ar[r]^-{\delta}&\mathfrak X^{-p-1}(A; M)\ar[r]^-{\delta}&\cdots
}$$
where
$\delta$ is given by
\begin{eqnarray*}
\delta(P)(f_0, f_1,\cdots, f_p)&:=&\sum_{0\le i\le p}(-1)^i\{f_i, P(f_0,\cdots, \widehat{f_i},\cdots, f_p)\}\\
&+&\sum_{0\le i<j\le p}(-1)^{i+j}P(\{f_i, f_j\}, f_1, \cdots, \widehat{f_i},\cdots, \widehat{f_j},\cdots, f_p).
\end{eqnarray*}
The associated cohomology is called the {\it Poisson cohomology} of $A$ with values in $M$, and is denoted
by $\mathrm{HP}^\bullet(A; M)$.
In particular, if $M=A$, then the cochain complex is denoted by
$\mathrm{CP}^\bullet(A)$, and the cohomology is 
called the {\it Poisson cohomology} of $A$, and is
denoted by $\mathrm{HP}^\bullet(A)$.
\end{definition}

Note that in the above definition, the Poisson cochain complex, viewed as a chain complex,
is negatively graded,
and the coboundary $\delta$ has degree $-1$. However, by our convention,
the Poisson cohomology are positively graded.

\begin{remark}[The graded case]
The Poisson homology and cohomology can be defined
for graded Poisson algebras as well.
In this case,
$$
\Omega^p(A)=\bigoplus_{n\in\mathbb Z}\Big\{
f_0df_1\wedge \cdots\wedge df_n\Big|f_i\in A,
|f_0|+|f_1|+\cdots+|f_n|+n=p
\Big\}
$$
and
$\mathfrak X^{-p}(A; M)$ is again given by
$\mathrm{Hom}_{A}(\Omega^p(A), M)$.
The boundary maps are completely analogous to
those of Poisson chain and cochain complexes (with Koszul's sign convention taken into account).
\end{remark}

\begin{proof}[Proof of Theorem \ref{thm:firsttheorem}]

(1) We first show the first isomorphism in \eqref{iso:HP}.
Since $A=k[x_1,\cdots,x_n]$, we have an explicit expression
for $\Omega^\bullet(A)$, which is
\begin{equation}\label{formula:Poissonchaincpx}
\Omega^\bullet(A)=\mathbf\Lambda(x_1,\cdots, x_n, dx_1,\cdots, dx_n),
\end{equation}
where $\mathbf\Lambda$ means the graded symmetric tensor product, and
$|x_i|=0$ and $|dx_i|=1$, for $i=1,\cdots, n$.
Similarly,
$$
\Omega^\bullet(A^!)=\mathbf\Lambda(\xi_1,\cdots,\xi_n,d\xi_1,\cdots, d\xi_n),
$$
where
$|\xi_i|=-1$ and $|d\xi_i|=0$ for $i=1,\cdots, n$,
and therefore
\begin{eqnarray}
\mathfrak X^\bullet(A^!; A^{\ac})&=&\mathrm{Hom}_{A^!}(\Omega^\bullet(A^!), A^{\ac})\nonumber\\
&=&\mathrm{Hom}_{\mathbf\Lambda(\xi_1,\cdots,\xi_n)}(\mathbf\Lambda(\xi_1,\cdots,\xi_n,d\xi_1,\cdots, d\xi_n),
\mathrm{Hom}(\mathbf\Lambda(\xi_1,\cdots,\xi_n),k))\nonumber\\
&=&\mathrm{Hom}_{\mathbf\Lambda(\xi_1,\cdots,\xi_n)}(\mathbf\Lambda(\xi_1,\cdots,\xi_n)\otimes\mathbf\Lambda
(d\xi_1,\cdots, d\xi_n),
\mathrm{Hom}(\mathbf\Lambda(\xi_1,\cdots,\xi_n),k))\nonumber\\
&=&\mathrm{Hom}(\mathbf\Lambda
(d\xi_1,\cdots, d\xi_n),
\mathrm{Hom}(\mathbf\Lambda(\xi_1,\cdots,\xi_n),k))\nonumber\\
&=&\mathrm{Hom}(\mathbf\Lambda
(d\xi_1,\cdots, d\xi_n)\otimes\mathbf\Lambda(\xi_1,\cdots,\xi_n), k)\nonumber\\
&=&\mathrm{Hom}(\mathbf\Lambda
(d\xi_1,\cdots, d\xi_n,\xi_1,\cdots,\xi_n), k)\nonumber\\
&=&\mathbf\Lambda\Big(
\frac{\partial}{\partial \xi_1},\cdots,
\frac{\partial}{\partial\xi_n},
\xi_1^*,\cdots,\xi_n^*\Big).\label{formula:coPoissoncochaincpx}
\end{eqnarray}
Thus from \eqref{formula:Poissonchaincpx} and \eqref{formula:coPoissoncochaincpx}
there is a canonical grading preserving isomorphism of vector spaces:
\begin{equation}\label{identificationofPoissonchaincomplexes}
\begin{array}{cccll}
\Phi:&\Omega^\bullet(A)&\longrightarrow&\mathfrak X^{\bullet}(A^!; A^{\ac})&\\
&x_i&\longmapsto&\frac{\partial}{\partial \xi_i}&\\
&dx_i&\longmapsto&\xi_i^*,&i=1,\cdots, n.
\end{array}
\end{equation}
It is a direct check that $\Phi$ is a chain map, and
thus we obtain an isomorphism of Poisson complexes
\begin{equation}\label{iso:Poisson_chain}
\Phi: \mathrm{CP}_\bullet(A)\cong\mathrm{CP}^{-\bullet}(A^!; A^{\ac}),
\end{equation}
which then induces an isomorphism on the homology.

(2) We now show the second isomorphism in \eqref{iso:HP}.
Similarly to the above argument, we have
\begin{eqnarray}\label{Poissoncochainforpolynomials}
\mathrm{CP}^\bullet(A)&=&\mathrm{Hom}_A(\Omega^\bullet(A),A)\nonumber\\
&=&\mathrm{Hom}_{\mathbf\Lambda(x_1,\cdots,x_n)}(\mathbf\Lambda(x_1,\cdots,x_n,dx_1,\cdots,dx_n), \mathbf\Lambda(x_1,\cdots,x_n))
\nonumber\\
&=&\mathrm{Hom}_{\mathbf\Lambda(x_1,\cdots,x_n)}(\mathbf\Lambda(x_1,\cdots,x_n)\otimes
\mathbf\Lambda(dx_1,\cdots,dx_n), \mathbf\Lambda(x_1,\cdots,x_n))\nonumber\\
&=&\mathrm{Hom}(\mathbf\Lambda(dx_1,\cdots,dx_n), \mathbf\Lambda(x_1,\cdots,x_n))\nonumber\\
&=&\mathbf\Lambda\Big(\frac{\partial}{\partial x_1},\cdots,\frac{\partial}{\partial x_n}\Big)\otimes\mathbf\Lambda(x_1,\cdots,x_n)
\end{eqnarray}
and
\begin{eqnarray}\label{Poissoncochainforexteriorpolynomials}
\mathrm{CP}^\bullet(A^!)&=&\mathrm{Hom}_{A^!}(\Omega^\bullet(A^!),A^!)\nonumber\\
&=&\mathrm{Hom}_{\mathbf\Lambda(\xi_1,\cdots,\xi_n)}
(\mathbf\Lambda(\xi_1,\cdots,\xi_n,d\xi_1,\cdots,d\xi_n), \mathbf\Lambda(\xi_1,\cdots,\xi_n))\nonumber\\
&=&\mathrm{Hom}_{\mathbf\Lambda(\xi_1,\cdots,\xi_n)}
(\mathbf\Lambda(\xi_1,\cdots,\xi_n)\otimes\mathbf\Lambda
(d\xi_1,\cdots,d\xi_n), \mathbf\Lambda(\xi_1,\cdots,\xi_n))\nonumber\\
&=&\mathrm{Hom}(\mathbf\Lambda
(d\xi_1,\cdots,d\xi_n), \mathbf\Lambda(\xi_1,\cdots,\xi_n))\nonumber\\
&=&\mathbf\Lambda\Big(\frac{\partial}{\partial \xi_1},\cdots,\frac{\partial}{\partial \xi_n}\Big)
\otimes\mathbf\Lambda(\xi_1,\cdots,\xi_n).
\end{eqnarray}
Under the identity
\begin{equation}\label{mapsthatgiveqis}
x_i\mapsto \frac{\partial}{\partial \xi_i},\quad \frac{\partial}{\partial x_i}\mapsto \xi_i
\end{equation}
we again obtain an isomorphism of chain complexes
$$
\Psi:\mathrm{CP}^\bullet(A)\cong\mathrm{CP}^\bullet(A^!).
$$
This completes the proof.
\end{proof}

\section{Unimodular Poisson algebras and Koszul duality}\label{sect:unimodular_Poisson}

In this section, we study {\it unimodular} Poisson algebras.
We are particularly interested in the algebraic structures
on their Poisson cohomology and homology groups, which are summarized
by {\it differential calculus}, a notion introduced by
Tamarkin and Tsygan in \cite{TT05}.

\begin{definition}[Differential calculus; Tamarkin-Tsygan \cite{TT05}]\label{def:diffcalculus}
Let $\mathrm{H}^{\bullet}$ and $\mathrm{H}_{\bullet}$ be graded vector spaces.
A \textit{differential calculus} is the sextuple
$$
(\mathrm{H}^{\bullet},\mathrm{H}_{\bullet}, \cup, \iota, [-,-], d)
$$
satisfying the following conditions:
\begin{enumerate}
  \item[(1)] $(\mathrm{H}^{\bullet},\cup,[-,-])$ is a Gerstenhaber algebra; that is,
  $(\mathrm H^\bullet, \cup)$ is a graded commutative algebra,
$(\mathrm H^\bullet,[-,-])$ is a degree $1$ or $-1$ graded Lie algebra,
and
the product and Lie bracket are compatible in the following sense
$$
[P\cup Q, R]=P\cup [Q,R]+(-1)^{pq}Q\cup [P,R],
$$
for homogeneous $P, Q, R\in V$ of degree $p,q,r$, respectively;

  \item[(2)] $\mathrm{H}_{\bullet}$ is a graded (left) module over $(\mathrm{H}^{\bullet}, \cup)$ via the map
              $$
               \iota: \mathrm{H}^{n}\otimes\mathrm{H}_{m}\to \mathrm{H}_{m-n},\;f\otimes \alpha\mapsto \iota_f\alpha,
              $$
                for any $f\in \mathrm{H}^{n}$ and $\alpha \in \mathrm{H}_{m}$;
  \item[(3)] There is a map $d: \mathrm{H}_{\bullet}\to \mathrm{H}_{\bullet+1}$
  satisfying $d^2=0$, and moreover, if we set
  ${L}_f:=[d, \iota_f]=d\iota_f-(-1)^{|f|}\iota_f d$,
  then
               $$
              (-1)^{|f|+1}\iota_{[f,g]}=[L_f,\iota_g] :={L}_{f}\iota_g-(-1)^{|g|(|f|+1)}\iota_g {L}_f.
               $$
\end{enumerate}
\end{definition}

In the following, if $\cup$, $\iota$, $[-,-]$ and $d$
are clear from the context, we will simply write
a differential calculus by $(\mathrm H^\bullet,\mathrm H_\bullet)$ for short.

\subsection{Differential calculus on Poisson (co)homology}

Suppose $A$ is a commutative algebra.
Besides the de Rham differential on $\Omega^\bullet(A)$,
we have the following operations on $\mathfrak{X}^\bullet(A)$ and $\Omega^\bullet(A)$:
\begin{enumerate}
\item[(1)] Wedge (cup) product:
suppose $P\in\mathfrak X^{-p}(A)$ and $Q\in\mathfrak X^{-q}(A)$,
then the {\it wedge product} of $P$ and $Q$, denoted by $P\cup Q$, is a polyvector
in $\mathfrak X^{-p-q}(A)$ defined by
$$
(P\cup Q)(f_1,f_2,\cdots, f_{p+q}):=
\sum_{\sigma\in S_{p,q}}\mathrm{sgn}(\sigma)P(f_{\sigma(1)},\cdots, f_{\sigma(p)})
\cdot
Q(f_{\sigma(p+1)},\cdots, f_{\sigma(p+q)}),
$$
where $\sigma$ runs over all $(p,q)$-shuffles of $(1,2,\cdots, p+q)$.

\item[(2)] Schouten bracket:
suppose $P\in\mathfrak X^{-p}(A)$ and $Q\in\mathfrak X^{-q}(A)$,
then their {\it Schouten bracket}, denoted by $[P,Q]$,
is an element in $\mathfrak{X}^{-p-q+1}(A)$ given by
\begin{multline*}
[P, Q](f_1,f_2,\cdots, f_{p+q-1}):=\sum_{\sigma\in S_{q,p-1}}\mathrm{sgn}(\sigma)
P\big(Q(f_{\sigma(1)},\cdots, f_{\sigma(q)}),f_{\sigma(q+1)},\cdots, f_{\sigma(q+p-1)}\big)\\
-(-1)^{(p-1)(q-1)}\sum_{\sigma\in S_{p,q-1}}\mathrm{sgn}(\sigma)
Q\big(P(f_{\sigma(1)},\cdots, f_{\sigma(p)}),f_{\sigma(p+1)},\cdots,f_{\sigma(p+q-1)}\big).
\end{multline*}

\item[(3)] Contraction (inner product): suppose $P\in\mathfrak X^{-p}(A)$ and $\omega=df_1\wedge \cdots\wedge df_n\in\Omega^n(A)$,
then
the {\it contraction} of $P$ with $\omega$, denoted by
$\iota_P(\omega)$, is an $A$-linear map
with values in $\Omega^{n-p}(A)$
given by
$$
\iota_P(\omega)=\left\{
\begin{array}{cl}\displaystyle
\sum_{\sigma\in S_{p,n-p}}\mathrm{sgn}(\sigma)
P(f_{\sigma(1)},\cdots, f_{\sigma(p)})df_{\sigma(p+1)}\wedge\cdots \wedge df_{\sigma(n)},&\mbox{if}\; n\ge p,\\
0,&\mbox{otherwise.}
\end{array}
\right.
$$

\item[(4)] Lie derivative: the {\it Lie derivative} is given by the Cartan formula, namely
for $P\in\mathfrak X^{-p}(A)$ and $\omega\in\Omega^{n}(A)$,
the Lie derivative of $\omega$ with respect to $P$
is given by
$$
L_P\omega:=[\iota_{P},d]=\iota_P(d\omega)-(-1)^p d(\iota_P\omega),
$$
where $d$ is the de Rham differential.
\end{enumerate}

\begin{theorem}\label{DC-on-Poisson}
Suppose $A$ is a Poisson algebra. Then
$$\big(\mathrm{HP}^\bullet(A), \mathrm{HP}_\bullet(A), \cup,\iota, [-,-], d\big),
$$
where $d$ is the de Rham differential, is a differential calculus.
\end{theorem}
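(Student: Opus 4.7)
The plan is to verify the six axioms of Definition \ref{def:diffcalculus} first at the chain level on the pair $(\mathfrak X^\bullet(A), \Omega^\bullet(A))$, and then transport the structure to (co)homology. For an arbitrary commutative algebra $A$, the wedge product $\cup$ and the Schouten bracket $[-,-]$ make $\mathfrak X^\bullet(A)$ into a Gerstenhaber algebra, while the de Rham differential $d$ makes $\Omega^\bullet(A)$ into a commutative DG algebra on which $\mathfrak X^\bullet(A)$ acts by contraction $\iota$. The compatibility between $\iota$, $d$ and $[-,-]$, namely the Cartan magic formula $L_P=[\iota_P,d]$ together with the identity
\[
(-1)^{|P|+1}\iota_{[P,Q]}=[L_P,\iota_Q],
\]
is a classical identity that I would verify by induction on the polyvector degree, using that $\mathfrak X^\bullet(A)$ is generated as a graded commutative algebra by $A$ and $\mathrm{Der}(A)$ and that both sides are derivations in $Q$ with respect to $\cup$. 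This yields what Tamarkin--Tsygan call a \emph{precalculus} structure on $(\mathfrak X^\bullet(A),\Omega^\bullet(A))$.

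Next I would feed in the Poisson bivector $\pi$. It defines a differential $\delta_\pi:=[\pi,-]$ on $\mathfrak X^\bullet(A)$ and a differential $\partial:=L_\pi=[\iota_\pi,d]$ on $\Omega^\bullet(A)$; a direct check identifies $\delta_\pi$ with the Lichnerowicz coboundary of $\mathrm{CP}^\bullet(A)$ and $\partial$ with the Koszul--Brylinski boundary of $\mathrm{CP}_\bullet(A)$. The Jacobi identity $[\pi,\pi]=0$, combined with the graded Jacobi identity for the Schouten bracket, gives $\delta_\pi^2=0$; applying the Cartan identity above with $P=Q=\pi$ and using $[\pi,\pi]=0$ yields $\partial^2=0$.

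To descend the precalculus to (co)homology I would then verify that $\delta_\pi$ is a biderivation of $(\cup,[-,-])$ (immediate from $\mathrm{ad}\,\pi$ being a derivation of the Gerstenhaber structure), that $\partial$ anticommutes with $d$ (since $L_\pi$ commutes with $d$), and that contraction satisfies
\[
[\partial,\iota_P]=\pm\,\iota_{\delta_\pi P},
\]
which is the Cartan identity applied to the pair $(\pi,P)$. These relations show that $\cup$, $[-,-]$, $\iota$ and $d$ all pass to $\mathrm{HP}^\bullet(A)$ and $\mathrm{HP}_\bullet(A)$, and that every identity of a precalculus survives on cohomology classes. Axiom (3) of Definition \ref{def:diffcalculus}, finally, is just the chain-level Cartan identity read on (co)homology.

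The main obstacle is purely combinatorial: establishing the Cartan-type identity $(-1)^{|P|+1}\iota_{[P,Q]}=[L_P,\iota_Q]$ at the chain level with the correct Koszul signs. Once this is settled, verifying that $\delta_\pi$ and $\partial$ preserve every structural identity is formal, and the remaining axioms follow by naturality. In particular, no specific feature of $A=k[x_1,\dots,x_n]$ is used here; the argument works for any Poisson algebra over a field of characteristic zero, which is why the theorem will serve as the structural input for Theorem \ref{thm:secondtheorem} and Theorem \ref{thm:thirdtheorem} in later sections.
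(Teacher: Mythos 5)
Your proposal is correct and takes essentially the same route as the paper: one establishes the Tamarkin--Tsygan (pre)calculus on the pair $(\mathfrak X^\bullet(A),\Omega^\bullet(A))$ at the chain level and then checks that the Poisson coboundary and boundary, realized as $[\pi,-]$ and $L_\pi=[\iota_\pi,d]$, respect the operations $\cup$, $[-,-]$, $\iota$ and $d$ so that everything descends to $\mathrm{HP}^\bullet(A)$ and $\mathrm{HP}_\bullet(A)$. The paper simply delegates this direct check to \cite[Chapter 3]{LGPV}; the Cartan-type identity and the compatibility verifications you outline are exactly the content of that citation.
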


\begin{proof}
We only have to show the operations listed above
respect the Poisson boundary and coboundary. It is a direct check and can be found in \cite[Chapter 3]{LGPV}.
\end{proof}


In the following, we will give another differential calculus structure for a Poisson algebra, which will be used later:

(1) For any $P\in\mathfrak X^{-p}(A)$ and $\phi\in\mathfrak X^{-q}(A; A^*)$,
let
$
\iota^*_P(\phi)\in\mathfrak X^{-p-q}(A; A^*)
$
be given by
\begin{equation}\label{iotaofPoisson}
(\iota^*_P\phi)(f_1, \cdots, f_{p+q}):=
\sum_{\sigma\in S_{p,q}}\mathrm{sgn}(\sigma)
P(f_{\sigma(1)},\cdots, f_{\sigma(p)})\cdot
\phi(f_{\sigma(p+1)},\cdots, f_{\sigma(p+q)}).
\end{equation}
It is clear that $\iota^*$ is associative,
i.e., $\iota^*_Q\circ\iota^*_P=\iota^*_{P\cup Q}$.
Also, $\iota^*$ respects the Poisson coboundary maps, which
is completely analogous to the proof of that $\cup$ commutes
with the Poisson coboundary map ({\it cf.} \cite[\S4.3]{LGPV}).

(2) Observe that
\begin{eqnarray}
\mathfrak X^\bullet(A; A^*)&=&\mathrm{Hom}_A(\Omega^\bullet(A), A^*)\nonumber\\
&=&\mathrm{Hom}_A\big(\Omega^\bullet(A), \mathrm{Hom}(A, k)\big)\nonumber\\
&=&\mathrm{Hom}_A(\Omega^\bullet(A)\otimes A, k)\nonumber\\
&=&\mathrm{Hom}(\Omega^\bullet(A), k).\label{vectorfieldswithvalueindual}
\end{eqnarray}
By dualizing the de Rham differential $d$ on $\Omega^\bullet(A)$,
we obtain a differential $d^*$ on $\mathrm{Hom}(\Omega^\bullet(A), k)$,
i.e., on $\mathfrak X^\bullet(A; A^*)$.
It is proved in \cite[Theorem 4.10]{ZVOZ} that
$d^*$ commutes with the Poisson boundary.

(3) For any $P\in\mathfrak X^\bullet(A)$ and $\omega\in\mathfrak X^\bullet(A; A^*)$,
let $L_P\omega:=[\iota^*_P, d^*](\omega)$; it is a direct check that
$$
[L_P, \iota^*_Q]=\iota^*_{[P, Q]}.
$$

By (1)-(3) listed above, we obtain the following.

\begin{theorem}\label{DC-on-Poisson-dual}
Suppose $A$ is a Poisson algebra, and denote $A^*$ be its dual space. Then
$$\big(\mathrm{HP}^\bullet(A),\mathrm{HP}^\bullet(A; A^*),\cup,\iota^*,[-,-], d^*\big)$$
is a differential calculus.
\end{theorem}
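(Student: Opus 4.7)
The plan is to verify, in turn, each of the three conditions of Definition \ref{def:diffcalculus} for the sextuple in question, using precisely the three ingredients (1), (2), (3) outlined just before the statement.

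Condition (1), the Gerstenhaber structure on $\mathrm{HP}^\bullet(A)$, is inherited verbatim from Theorem \ref{DC-on-Poisson} and requires nothing new. For condition (2), two things are needed. First, associativity $\iota^*_Q \circ \iota^*_P = \iota^*_{P \cup Q}$ falls out of \eqref{iotaofPoisson} by direct expansion of the shuffle sums. Second, $\iota^*$ must descend to cohomology, which reduces to a Leibniz-type identity of the form $\delta(\iota^*_P \phi) = \iota^*_{\delta P}\phi \pm \iota^*_P(\delta \phi)$ at the chain level; the hint in the text is to mimic the argument in \cite[\S4.3]{LGPV} that $\cup$ commutes with the Poisson coboundary, the only novelty being that one coefficient now lives in $A^*$ and is paired against arguments rather than multiplied into them.

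For condition (3), the differential $d^*$ on $\mathfrak{X}^\bullet(A^*) = \mathrm{Hom}(\Omega^\bullet(A), k)$ is defined by dualizing the de Rham differential, so $(d^*)^2 = 0$ is immediate from $d^2 = 0$. That $d^*$ commutes with the Poisson coboundary on $\mathfrak{X}^\bullet(A^*)$, hence descends to $\mathrm{HP}^\bullet(A; A^*)$, I would quote directly from \cite[Theorem 4.10]{ZVOZ}. The remaining identity $[L_P, \iota^*_Q] = \iota^*_{[P,Q]}$ with $L_P := [\iota^*_P, d^*]$ is the main obstacle. Although structurally parallel to the classical Cartan magic formula, the dual-coefficient setting forces careful bookkeeping of signs and shuffle permutations. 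My plan is to evaluate both sides on a tuple $(f_1, \ldots, f_{p+q})$, expand via the definitions of $\iota^*$, $d^*$, and the Schouten bracket, and match the resulting sums term by term, mirroring the proof in the $M = A$ case given in \cite[Chapter 3]{LGPV} with $A$-valued multiplications replaced throughout by $A^*$-valued dual pairings.

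Once these three pieces are assembled, all three conditions of Definition \ref{def:diffcalculus} hold and the theorem follows. Conceptually, the only new content is the observation that dualizing $\Omega^\bullet(A)$ to $\mathfrak{X}^\bullet(A^*)$ produces a second differential calculus parallel to the one of Theorem \ref{DC-on-Poisson}; the only computation with substance is the Cartan-type identity in condition (3).
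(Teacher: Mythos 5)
Your proposal follows essentially the same route as the paper: the paper's proof consists precisely of the three observations you cite (associativity of $\iota^*$ and its compatibility with the Poisson coboundary by analogy with the cup product argument in \cite[\S4.3]{LGPV}, the identification $\mathfrak X^\bullet(A^*)=\mathrm{Hom}(\Omega^\bullet(A),k)$ with $d^*$ dual to de Rham and commuting with the Poisson boundary by \cite[Theorem 4.10]{ZVOZ}, and the Cartan-type identity $[L_P,\iota^*_Q]=\iota^*_{[P,Q]}$ checked directly), assembled against Definition \ref{def:diffcalculus}. The only difference is one of explicitness: the paper dismisses the Cartan identity as ``a direct check,'' whereas you correctly flag it as the one computation requiring real bookkeeping.
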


We next introduce two DG Lie algebras associated to the above two differential calculi.
Let us start with the notion of negative cyclic homology.

\begin{definition}[Cyclic homology; {\it cf.} Jones \cite{Jones} and Kassel \cite{Kassel}]\label{def_HC}
Suppose $(\mathrm C_\bullet, b, B)$ is a mixed complex, with $|b|=-1$ and $|B|=1$.
Let
$u$ be a free variable of degree $-2$ which commutes with $b$ and $B$.
The {\it negative cyclic chain complex} of $\mathrm C_\bullet$ is the following complex
\begin{eqnarray*}
(\mathrm C_\bullet[\![u]\!], b+uB),
\end{eqnarray*}
and is denoted by $\mathrm{CC}^{-}_\bullet(\mathrm C_\bullet)$.
The associated homology is called the {\it negative cyclic homology} of $\mathrm C_\bullet$, and is denoted by
$\mathrm{HC}_\bullet^{-}(\mathrm C_\bullet)$.
\end{definition}


\begin{remark}[Cyclic cohomology]\label{rmk:cycliccohomology}
Suppose $(\mathrm C^\bullet, b, B)$ is a mixed cochain complex, namely
$|b|=1$ and $|B|=-1$. By negating the degrees of $\mathrm C^\bullet$, we obtain a
mixed chain complex, denoted by $(\mathrm C_\bullet, b, B)$ with $|b|=-1$ and $|B|=1$.
By our convention,
the {\it cyclic cohomology}
of $(\mathrm C^\bullet, b, B)$, denoted by $\mathrm{HC}^\bullet(\mathrm C^\bullet)$, is the {\it cohomology}
of the negative cyclic complex of $(\mathrm C_\bullet, b, B)$.
\end{remark}



Consider the mixed complex $\Omega^\bullet(A)$ with differential $(0, d)$,
where $d$ is the de Rham differential.
Equip $\mathfrak X^\bullet(A)$
with trivial differential.
Since $\Omega^\bullet(A)$ is a Lie module over $\mathfrak X^\bullet(A)$ whose action commutes with $d$,
the negative cyclic complex $(\Omega^\bullet(A)[\![u]\!], ud)$
is a DG module over $\mathfrak X^\bullet(A)$.
Consider the semi-direct product
\begin{equation}\label{DGLAofPoisson}
\mathfrak P(A)^{\#}:=\Sigma\mathfrak X^\bullet(A)\ltimes\Sigma^{-1-n}\Omega^\bullet(A)[\![u]\!],
\end{equation}
where $n$ is an arbitrary integer number.
It is a DG Lie algebra with differential $(0, ud)$.

Similarly, for the mixed complex $(\mathfrak X(A^*), 0, d^*)$, we have the DG Lie algebra
\begin{equation}\label{DGLAofFrobeniusPoisson}
\mathfrak P^{\circ}(A)^{\#}
:=\Sigma\mathfrak X^\bullet(A)\ltimes\Sigma^{-1-n}\mathfrak X^\bullet(A; A^*)[\![u]\!],
\end{equation}
with differential given by $(0, ud^*)$.

\subsection{Unimodular Poisson algebras}

Suppose $A$ is a commutative algebra, and
$\eta\in\Omega^n(A)$.
We say $\eta$ is a volume form if $\mathfrak X^\bullet (A)\stackrel{\iota_{(-)}\eta}{\longrightarrow} \Omega^{n+\bullet}(A)$
is an isomorphism of vector spaces.
Now suppose $A$ is Poisson, then we have the following diagram
\begin{equation}\label{diag:unimodularPoisson}
\xymatrixcolsep{4pc}
\xymatrix{
\mathfrak X^\bullet (A)\ar[r]^-{\iota_{(-)}\eta}&\Omega^{n+\bullet}(A)\\
\mathfrak X^{\bullet+1} (A)\ar[r]^-{\iota_{(-)}\eta}\ar[u]_{\delta}&\Omega^{n+\bullet+1}(A),\ar[u]_{\partial}
}
\end{equation}
which may not be commutative, i.e., $\eta$ may not be a Poisson cycle.
We say $A$ is {\it unimodular} if there exists a volume form $\eta$ such that
\eqref{diag:unimodularPoisson} commutes.

In terms of the DG Lie algebra \eqref{DGLAofPoisson},
being unimodular is equivalent to the following.

\begin{proposition}\label{prop:altofunimodular}
Let $A$, $\pi$ and $\eta$ be as above. Then the bivector $\pi$
is unimodular Poisson
if and only if $(\Sigma\pi, \Sigma^{-1-n}\eta)$ is a Maurer-Cartan element of
the DG Lie algebra \eqref{DGLAofPoisson}.
\end{proposition}

The proof is a direct check, and
we leave it to the interested reader.
Recall that for a DG Lie algebra $(L, d)$,
any Maurer-Cartan element, say $a\in L$,
gives a new DG Lie algebra structure on $L$ with
differential $\tilde d=d+[a, -]$. Denote this DG Lie algebra
by $L_a$. Going back to the above proposition,
in the following we denote
$$
\mathfrak P(A,\eta):=\mathfrak P(A)^{\#}_{(\Sigma\pi, \Sigma^{-1-n}\eta)},
$$
which will be used later in \S\ref{sect:connections}.

The following is also immediate
from \eqref{diag:unimodularPoisson}.

\begin{theorem}[Xu]\label{thm:Xu}
Suppose $A$ is a unimodular Poisson algebra with the volume form
of degree $n$.
Then $(
\mathrm{HP}^\bullet(A),
\mathrm{HP}_{n-\bullet}(A))$
forms a differential calculus with duality, and therefore
there exists an isomorphism (the Poincar\'e duality)
$$
\mathrm{HP}^\bullet(A)\cong
\mathrm{HP}_{n-\bullet}(A).
$$
\end{theorem}

\subsection{Unimodular Frobenius Poisson algebras}

Now, we go to unimodular Frobenius Poisson algebras, a notion
introduced by  Zhu, Van Oystaeyen and Zhang in
\cite{ZVOZ}.

Suppose $A^!$ is a finite dimensional graded not-necessarily commutative algebra.
$A^!$ is called {\it symmetric Frobenius}
if it is equipped with a bilinear, non-degenerate symmetric pairing
$$
\langle-,-\rangle: A^!\otimes A^!\to k
$$
of degree $n$ which is cyclically invariant, that is,
$
\langle a, b\cdot c\rangle=(-1)^{(|a|+|b|)|c|}\langle c, a\cdot b\rangle
$, for all homogeneous $a, b, c\in A^!$.
This is equivalent to saying that there is an $A^!$-bimodule isomorphism
$$
\eta^!:(A^{!})^\bullet\longrightarrow (A^{\ac})_{n+\bullet},\quad\mbox{for some\;} n\in\mathbb N,
$$
where $A^{\ac}=(A^{!})^*$.
In this case, we
may view $\eta^!$ as an element in $\mathrm{Hom}_{A^!}(A^!, A^{\ac})\subset\mathfrak X^\bullet(A^!; A^{\ac})$.
Now assume $A^!$ is Poisson,  then we have a diagram
\begin{equation}\label{formula:unimodularcyclicPoisson}
\xymatrixcolsep{4pc}
\xymatrix{
\mathfrak X^\bullet (A^!) \ar[r]^-{\iota^*_{(-)}\eta^!}&\mathfrak X^{n+\bullet}(A^!; A^{\ac})\\
\mathfrak X^{\bullet+1} (A^!)\ar[r]^-{\iota^*_{(-)}\eta^!}\ar[u]_{\delta}&\mathfrak X^{n+\bullet+1}(A^!; A^{\ac}).\ar[u]_{\delta}
}
\end{equation}
According to
Zhu-Van Oystaeyen-Zhang \cite{ZVOZ},
if there exists $\eta^!\in\mathfrak X^\bullet(A^!; A^{\ac})$
such that $\iota^*_{(-)}\eta^!$ is an isomorphism, then
$\eta^!$ is called a {\it volume form},
and if furthermore,
the digram
\eqref{formula:unimodularcyclicPoisson}
commutes,
then $A^!$ is called a {\it unimodular Frobenius Poisson algebra} of degree $n$
(in \cite{ZVOZ} the authors call it {\it unimodular Fronbenius Poisson}).
From the definition, we immediately have:

\begin{theorem}[Zhu-Van Oystaeyen-Zhang \cite{ZVOZ}]\label{thm:ZVOZ}
Suppose $A^!$ is a unimodular Frobenius Poisson algebra with the volume form
of degree $n$.
Then $(
\mathrm{HP}^\bullet(A^!),
\mathrm{HP}^{\bullet-n}(A^!; A^{\ac}))
$
forms a differential calculus with duality
and therefore there exists an isomorphism
$$
\mathrm{HP}^\bullet(A^!)\cong
\mathrm{HP}^{\bullet-n}(A^!; A^{\ac}).
$$
\end{theorem}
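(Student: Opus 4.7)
The plan is to exhibit contraction with the volume form, $\iota^*_{(-)}\eta^!$, as an isomorphism of cochain complexes
$$
\Psi\colon \mathrm{CP}^\bullet(A^!) = \mathfrak X^{-\bullet}_{A^!}(A^!) \xrightarrow{\ \cong\ } \mathfrak X^{-\bullet+n}_{A^!}(A^{\ac}) = \mathrm{CP}^{\bullet-n}(A^!;A^{\ac}),
$$
and then pass to cohomology. The target is a genuine cochain complex by \cite[Theorem 4.10]{ZVOZ}, recalled in \S\ref{sect:unimodular_Poisson}: the dualized de Rham differential $d^*$ commutes with the Poisson coboundary, so $(\mathrm{HP}^\bullet(A^!), \mathrm{HP}^\bullet(A^!;A^{\ac}))$ carries the differential-calculus structure of Theorem \ref{DC-on-Poisson-dual}.

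Two things must be checked about $\Psi$: bijectivity on each graded piece, and compatibility with the Poisson coboundary $\delta$. Bijectivity is exactly the hypothesis that $\eta^!$ is a volume form, i.e., that the horizontal arrows in diagram \eqref{formula:unimodularcyclicPoisson} are isomorphisms of graded vector spaces; the internal degree $n$ of the cyclic pairing accounts for the shift $\bullet\mapsto\bullet-n$ in cohomological grading. Compatibility with $\delta$ is the commutativity of that same diagram, i.e., the assertion that $\eta^!$ is a Poisson cocycle. Both statements together constitute precisely the definition of a ``unimodular symmetric Poisson algebra of degree $n$,'' so they hold by hypothesis, and the induced map on cohomology is the desired isomorphism.

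Since all substantive content has been packaged into the definition, the proof is essentially a matter of unpacking, which is why the paper can present the theorem as immediate. The one point worth flagging is the degree bookkeeping: one must distinguish the internal grading on $A^!$ (whose generators sit in degree $-1$) from the cohomological grading, and track the Koszul signs implicit in the associativity rule $\iota^*_Q\circ\iota^*_P = \iota^*_{P\cup Q}$, in order to confirm that $\Psi$ literally commutes with $\delta$ rather than anticommuting or picking up an unwanted sign.
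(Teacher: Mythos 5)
Your proposal is correct and follows exactly the route the paper intends: the theorem is stated as an immediate consequence of the definition, since the volume-form condition gives bijectivity of $\iota^*_{(-)}\eta^!$ in each degree and the commutativity of diagram \eqref{formula:unimodularcyclicPoisson} makes it a map of cochain complexes. Your additional remarks on the degree bookkeeping and Koszul signs are sensible but do not change the argument.
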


In this paper, since we are interested in $A=k[x_1,\cdots, x_n]$
or $A^!=\mathbf\Lambda(\xi_1,\cdots, \xi_n)$,
we always assume the volume form is constant.
The following is completely analogous to
Proposition \ref{prop:altofunimodular}:

\begin{proposition}\label{altdefofsymmetricunimodularity}
Suppose $A^!=\mathbf\Lambda(\xi_1,\cdots, \xi_n)$ with
volume form $\eta^!$.
Then a bivector $\pi^!$
is unimodular Frobenius Poisson
if and only if $(\Sigma\pi^!, \Sigma^{-1-n}\eta^!)$ is
a Maurer-Cartan element of the DG Lie algebra
$
\mathfrak P^{\circ}(A^!)^{\#}
$ given by \eqref{DGLAofFrobeniusPoisson}.
\end{proposition}

In the following \S\ref{sect:connections}
we shall use the DG Lie algebra
$$
\mathfrak P^{\circ}(A^!,\eta^!)
:=\mathfrak P^{\circ}(A^!)^{\#}_{(\Sigma\pi^!, \Sigma^{-1-n}\eta^!)}.$$

\begin{proof}[Proof of Theorem \ref{thm:secondtheorem}]
First, we show that
a quadratic Poisson algebra
$(A=k[x_1,\cdots,x_n], \pi)$ is unimodular if and only if $(A^!,\pi^{!})$ is unimodular Frobenius.
In fact, recall that for $A=k[x_1,\cdots,x_n]$,
$$
\begin{array}{ll}
\displaystyle\mathfrak{X}^\bullet (A)=\mathbf\Lambda\Big(x_1,\cdots, x_n,
\frac{\partial}{\partial x_1},\cdots, \frac{\partial}{\partial x_n}\Big), &
\displaystyle
\Omega^\bullet(A)=\mathbf\Lambda(x_1,\cdots, x_n, dx_1,\cdots, dx_n),\\
\displaystyle\mathfrak X^\bullet(A^!)=
\mathbf\Lambda\Big(\xi_1,\cdots,\xi_n, \frac{\partial}{\partial \xi_1},\cdots,\frac{\partial}{\partial \xi_n}\Big),
&
\displaystyle\mathfrak X^\bullet(A^!; A^{\ac})=\mathbf\Lambda\Big(\xi_1^*,\cdots, \xi_n^*,
\frac{\partial}{\partial \xi_1},\cdots,\frac{\partial}{\partial \xi_n}\Big).
\end{array}
$$
Let
$$
\eta=dx_1dx_2\cdots dx_n\quad\mbox{and}\quad
\eta^!=\xi_1^*\xi_2^*\cdots \xi_n^*,
$$
where $\eta^!$ is understood as contraction, namely,
$$\eta^!(\xi_{i_1}\cdots \xi_{i_p}):=\sum_{\sigma\in S_{p, n-p}}\langle\xi_{i_1}\cdots \xi_{i_p},
\xi^*_{\sigma(1)}\cdots \xi^*_{\sigma(p)}\rangle\cdot\xi^*_{\sigma(p+1)}\cdots \xi^*_{\sigma(n)},$$
then under the identification
\begin{equation}\label{idoftwodifferentialcalculus}
x_i\mapsto\frac{\partial}{\partial \xi_i},
\quad
dx_i\mapsto\xi_i^*,
\quad
\frac{\partial}{\partial x_i}\mapsto
\xi_i
\end{equation}
the diagram
\begin{equation}\label{cd_fourhomology}
\xymatrixcolsep{2pc}
\xymatrix{
\mathfrak{X}^\bullet (A)=\mathbf\Lambda\left(x_1,\cdots, x_n,
\frac{\partial}{\partial x_1},\cdots, \frac{\partial}{\partial x_n}\right)
\ar[r]^-{\iota_{(-)}\eta}\ar[d]^{\cong}&
\Omega^\bullet(A)=\mathbf\Lambda\left(x_1,\cdots, x_n, dx_1,\cdots, dx_n\right)\ar[d]^{\cong}\\
\mathfrak X^\bullet (A^!)=
\mathbf\Lambda\left(\xi_1,\cdots,\xi_n, \frac{\partial}{\partial \xi_1},\cdots,\frac{\partial}{\partial \xi_n}\right)
\ar[r]^{\iota^*_{(-)}\eta^!}
&
\mathfrak X^\bullet (A^!; A^{\ac})=\mathbf\Lambda\left(\xi_1^*,\cdots, \xi_n^*,
\frac{\partial}{\partial \xi_1},\cdots,\frac{\partial}{\partial \xi_n}\right)
}
\end{equation}
commutes.
This means $\eta$ is a Poisson cycle for $A$ if and only if $\eta^!$ is a Poisson
cocycle for $A^!$, which proves the claim.

Second, for $A$ as above, we show the following diagram
\begin{equation}\label{diag:four_iso}
\xymatrixcolsep{4pc}
\xymatrix{
\mathrm{HP}^\bullet(A)\ar[r]^-{\cong}\ar[d]^{\cong}&\mathrm{HP}_{n-\bullet}(A)\ar[d]^{\cong}\\
\mathrm{HP}^\bullet(A^!)\ar[r]^-{\cong}&\mathrm{HP}^{\bullet-n}(A^!; A^{\ac}).
}
\end{equation}
commutes.
In fact, the two vertical isomorphisms are given by Theorem \ref{thm:firsttheorem},
and the two horizontal isomorphisms are given by Theorems \ref{thm:Xu} and \ref{thm:ZVOZ} respectively.
The commutativity of the diagram \eqref{diag:four_iso} follows from the chain level commutative diagram \eqref{cd_fourhomology}.
\end{proof}

\begin{remark}
By the same identification \eqref{idoftwodifferentialcalculus},
one immediately sees that for quadratic Poisson algebra $A$
and its Koszul dual $A^!$, the two DG Lie algebras
given by \eqref{DGLAofPoisson} and \eqref{DGLAofFrobeniusPoisson}
are isomorphic.
\end{remark}

\section{Poisson cohomology and the Batalin-Vilkovisky algebra}\label{sect:Iso_BV}

The purpose of this section is to show that for unimodular quadratic Poisson polynomial algebras,
the horizontal isomorphisms in \eqref{diag:four_iso}
naturally induce on $\mathrm{HP}^\bullet(A)$ and $\mathrm{HP}^\bullet(A^!)$
a Batalin-Vilkovisky algebra structure, and the vertical isomorphisms in \eqref{diag:four_iso}
are isomorphisms of Batalin-Vilkovisky algebras.
We start with the notion of {\it differential calculus with duality}.

\begin{definition}[Lambre \cite{Lambre}]
A differential calculus $(\mathrm{H}^{\bullet},\mathrm{H}_{\bullet}, \cup, \iota, [-,-], d)$ is
called a \textit{differential calculus with duality} if there exists an integer $n$ and an element $\eta\in \mathrm{H}_{n}$
such that
\begin{enumerate}
\item[(a)]  $\iota_{1} \eta=\eta$, where $1\in \mathrm{H}^{0}$ is the unit, $d(\eta)=0$, and
\item[(b)]  for any $i\in \mathbb Z$,
            \begin{equation}\label{PD}
            \mathrm{PD}(-):=\iota_{(-)}\eta : \mathrm{H}^{i}\to \mathrm{H}_{n-i}
            \end{equation}
            is an isomorphism.
\end{enumerate}
Such isomorphism $\mathrm{PD}$ is called the \textit{Van den Bergh duality} (also called \textit{the
noncommutative Poincar\'e duality}),
and $\eta$ is called the {\it volume form}.
\end{definition}

\begin{definition}[Batalin-Vilkovisky algebra]
Suppose $(V,\bullet )$
is an graded commutative algebra. A {\it Batalin-Vilkovisky
algebra} structure on $V$ is the triple $(V,\bullet , \Delta)$
such that
\begin{enumerate}
\item[$(1)$] $\Delta: V^{i}\to V^{i-1}$ is a differential, that is, $\Delta^2=0$; and
\item[$(2)$] $\Delta$ is second order operator,
              that is,
              \begin{eqnarray*}
                \Delta(a\bullet  b\bullet  c)
                  &=& \Delta(a\bullet  b)\bullet  c+(-1)^{|a|}a\bullet \Delta(b\bullet  c)+(-1)^{(|a|-1)|b|}b\bullet \Delta(a\bullet  c)\nonumber \\
                  & & - (\Delta a)\bullet  b\bullet  c-(-1)^{|a|}a\bullet (\Delta b)\bullet  c-(-1)^{|a|+|b|}a\bullet  b\bullet  (\Delta c).
              \end{eqnarray*}
\end{enumerate}
\end{definition}

Equivalently,
if we define the bracket
$$
[a,b]:=(-1)^{|a|+1}(\Delta(a\bullet  b)-\Delta(a)\bullet  b- (-1)^{|a|}a\bullet  \Delta(b)),
$$
then $[-,-]$ is a derivation with respect to $\bullet $ for each component.
In other words, a  Batalin-Vilkovisky algebra is a Gerstenhaber algebra $(V,\bullet , [-,-])$
with a differential $\Delta: V^{i}\to V^{i-1}$ such that
\begin{equation}\label{G-BV-equ}
[a,b]=(-1)^{|a|+1}(\Delta(a\bullet  b)-\Delta(a)\bullet  b-(-1)^{|a|}a\bullet  \Delta(b)),
\end{equation}
for any $a, b\in V$ (\textit{cf}. \cite[Proposition 1.2]{Getzler}).
$\Delta$ is also called the Batalin-Vilkovisky operator, or the generator (of the Gerstenhaber bracket).

Now suppose $(\mathrm{H}^{\bullet},\mathrm{H}_{\bullet},\cup,\iota,[-,-],d,\eta)$
is a differential calculus with duality. Let
$\Delta: \mathrm{H}^\bullet\to\mathrm{H}^{\bullet-1}$ be the linear operator
such that

\begin{equation}\label{Def-of-BV-operator}
\xymatrixcolsep{4pc}
\xymatrix{
\mathrm H^\bullet\ar[r]^-{\Delta}\ar[d]^{\mathrm{PD}}&\mathrm H^{\bullet-1}\ar[d]^{\mathrm{PD}}\\
\mathrm H_{n-\bullet}\ar[r]^-{d}&\mathrm H_{n-\bullet+1}
}
\end{equation}
commutes.
Then we have the following theorem:

\begin{theorem}[Lambre \cite{Lambre}]\label{Thm_Lambre}
Let $(\mathrm{H}^{\bullet},\mathrm{H}_{\bullet},\cup,\iota,[-,-],d,\eta)$
be a differential calculus with duality.
Then the triple $(\mathrm{H}^{\bullet},\cup,\Delta)$ is a Batalin-Vilkovisky algebra.
\end{theorem}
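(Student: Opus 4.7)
The plan is to establish the two defining conditions of a Batalin--Vilkovisky algebra on $(\mathrm{H}^\bullet,\cup)$: the nilpotency $\Delta^2=0$, and the second-order identity \eqref{G-BV-equ} asserting that $\Delta$ generates the Gerstenhaber bracket supplied by axiom (1) of Definition \ref{def:diffcalculus}. By Getzler's reformulation cited at \cite[Proposition 1.2]{Getzler}, this is equivalent to the standard second-order operator condition on a Gerstenhaber algebra, so verifying these two statements is enough.

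Nilpotency is immediate. Since $\mathrm{PD}$ is an isomorphism and $\Delta=\mathrm{PD}^{-1}\circ d\circ\mathrm{PD}$ by the defining diagram \eqref{Def-of-BV-operator}, the equation $d^2=0$ on $\mathrm{H}_\bullet$ forces $\Delta^2=0$ on $\mathrm{H}^\bullet$.

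For the second-order identity I would apply $\mathrm{PD}=\iota_{(-)}\eta$ to both sides of \eqref{G-BV-equ} and reduce to a chain-level identity on $\eta$. Using the module structure $\iota_{x\cup y}=\iota_x\iota_y$ from axiom (2) of Definition \ref{def:diffcalculus}, together with the defining relation $\iota_{\Delta(x)}\eta=d(\iota_x\eta)$, the image of the right-hand side unfolds into $d(\iota_a\iota_b\eta)$ minus two iterated-contraction terms. I would then expand $d(\iota_a\iota_b\eta)$ by applying the Cartan formula $d\iota_f=L_f+(-1)^{|f|}\iota_f d$, which is just a rearrangement of $L_f=[d,\iota_f]$ in axiom (3), twice, and kill the innermost occurrence of $d$ using $d\eta=0$ from condition (a) in the definition of duality. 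The surviving expression contains a term $L_a\iota_b\eta$; rewriting it via the compatibility $(-1)^{|f|+1}\iota_{[f,g]}=[L_f,\iota_g]$ (again axiom (3)) and the identity $L_f\eta=\iota_{\Delta(f)}\eta$ produces precisely the bracket contribution $(-1)^{|a|+1}\iota_{[a,b]}\eta$, while the remaining pieces are absorbed into $\iota_{\Delta(a)\cup b}\eta$ after one application of graded commutativity of $\cup$. Injectivity of $\mathrm{PD}$ then lifts the chain-level equality to the claimed identity in $\mathrm{H}^\bullet$.

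The principal obstacle will be the disciplined bookkeeping of Koszul signs. All structural inputs—the module axiom, the Cartan formula and the bracket-contraction identity from (3), and the closedness $d\eta=0$—are already available; what requires care is that the sign factors accumulated by graded-commuting $d$ past $\iota_a$, and $\iota_b$ past $L_a$, aggregate to exactly the $(-1)^{|a|+1}$ and $(-1)^{|a|}$ appearing in \eqref{G-BV-equ}. No hypothesis beyond those in the definition of a differential calculus with duality is needed.
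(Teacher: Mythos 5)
Your proposal is correct and follows essentially the same route as the paper's proof: both reduce the generating identity \eqref{G-BV-equ} to a computation under $\mathrm{PD}=\iota_{(-)}\eta$ using the module axiom $\iota_{f\cup g}=\iota_f\iota_g$, the Cartan relation $L_f=[d,\iota_f]$, the bracket--contraction compatibility from axiom (3), and $d\eta=0$, concluding by injectivity of $\mathrm{PD}$. The only (immaterial) differences are that you run the chain of identities from the $\Delta$ side toward the bracket side rather than the reverse, and that you explicitly record the trivial verification $\Delta^2=\mathrm{PD}^{-1}d^2\,\mathrm{PD}=0$, which the paper leaves implicit.
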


The proof can be found in Lambre (\cite[Th\'{e}or\`{e}me 1.6]{Lambre}); however, since some details in loc. cit.
are omitted, we give a proof here for completeness.

\begin{proof}
Since $(\mathrm{H}^{\bullet},\cup, [-,-])$ is a Gerstenhaber algebra,
we only need to show that the Gerstenhaber bracket
is compatible with the operator $\Delta$ in \eqref{Def-of-BV-operator}; that is, equation \eqref{G-BV-equ} holds.
For any homogeneous elements $f,g\in \mathrm{H}^{\bullet}$,
by the definition of Poincar\'{e} duality $\mathrm{PD}$ \eqref{PD} and the Cartan formulae (Lemma \ref{Cartan formulas}),
we have
\begin{eqnarray*}
&&(-1)^{|f|+1}\mathrm{PD}([f,g])\\
&=&(-1)^{|f|+1} \iota_{[f,g]}(\eta) =
  [ {L}_f,\iota_g](\eta) =  {L}_f\iota_g(\eta)-(-1)^{|g|(|f|+1)}\iota_g  {L}_f(\eta)\\
   &=& d \iota_f\iota_g(\eta)- (-1)^{|f|}\iota_f d \iota_g(\eta)-(-1)^{|g|(|f|+1)}\iota_g d\iota_f(\eta)
        +(-1)^{|g|(|f|+1)+|f|}\iota_g\iota_f d(\eta)\\
   &=& d \circ \mathrm{PD}(f\cup g)-(-1)^{|g|(|f|+1)}\iota_g d \circ \mathrm{PD}(f)-(-1)^{|f|}\iota_f d\circ \mathrm{PD}(g)\\
   &=& \mathrm{PD}(\Delta(f\cup g))-(-1)^{|g|(|f|+1)}\iota_g \mathrm{PD}(\Delta(f))- (-1)^{|f|}\iota_f \mathrm{PD}(\Delta(g))\\
   &=& \iota_{\Delta(f\cup g)}(\eta)-(-1)^{|g|(|f|+1)}\iota_g \iota_{\Delta(f)} (\eta)- (-1)^{|f|}\iota_f \iota_{\Delta(g)}(\eta))\\
   &=& (\iota_{\Delta(f\cup g)}-(-1)^{|g|(|f|+1)}\iota_{g\cup \Delta(f)}-(-1)^{|f|}\iota_{f\cup \Delta(g)})(\eta)\\
   &=& \mathrm{PD}(\Delta(f\cup g)-\Delta(f)\cup g-(-1)^{|f|}f\cup \Delta(g)).
\end{eqnarray*}
Since $\mathrm{PD}$ is an isomorphism,
we thus have
\begin{equation*}
[f,g]=(-1)^{|f|+1}(\Delta(f\cup g)-\Delta(f)\cup g-(-1)^{|f|}f\cup \Delta(g)).
\qedhere
\end{equation*}
\end{proof}

\begin{corollary}[see also Xu \cite{Xu} and Zhu-Van Oystaeyen-Zhang \cite{ZVOZ}]\label{thm:Batalin-Vilkovisky}
Suppose $A$ is a unimodular Poisson or unimodular Frobenius Poisson algebra.
Then $\mathrm{HP}^\bullet(A)$ admits a Batalin-Vilkovisky algebra structure.
\end{corollary}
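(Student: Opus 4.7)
The plan is to invoke Lambre's Theorem \ref{Thm_Lambre} in each of the two cases, so the only real task is to upgrade the differential calculi of Theorems \ref{DC-on-Poisson} and \ref{DC-on-Poisson-dual} to differential calculi with duality by producing, in each case, an appropriate volume form.

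For a unimodular Poisson algebra $A=k[x_1,\ldots,x_n]$, I would take the differential calculus $(\mathrm{HP}^\bullet(A),\mathrm{HP}_\bullet(A),\cup,\iota,[-,-],d)$ furnished by Theorem \ref{DC-on-Poisson}, and choose as volume form the class of $\eta=dx_1\wedge\cdots\wedge dx_n\in\Omega^n(A)$. The definition of unimodular Poisson (diagram \eqref{diag:unimodularPoisson}) says exactly that $\eta$ is a Poisson cycle, hence it descends to a class in $\mathrm{HP}_n(A)$; the normalization $\iota_1\eta=\eta$ is tautological; $d\eta=0$ holds at the chain level for dimensional reasons, since $\Omega^{n+1}(A)=0$; and Theorem \ref{thm:Xu} says that $\iota_{(-)}\eta$ is the Poincar\'e-duality isomorphism. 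Lambre's theorem then produces a BV operator $\Delta$ on $\mathrm{HP}^\bullet(A)$ via diagram \eqref{Def-of-BV-operator}, and together with the cup product this gives the sought BV algebra structure.

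For a unimodular symmetric Poisson algebra $A^!$, I would run the same argument with the differential calculus $(\mathrm{HP}^\bullet(A^!),\mathrm{HP}^\bullet(A^!;A^{\ac}),\cup,\iota^*,[-,-],d^*)$ of Theorem \ref{DC-on-Poisson-dual}, using the volume form $\eta^!=\xi_1^*\xi_2^*\cdots\xi_n^*$ already singled out in the proof of Theorem \ref{thm:secondtheorem}. Unimodular symmetry, i.e.\ the commutativity of \eqref{formula:unimodularcyclicPoisson}, says precisely that $\eta^!$ is a Poisson cocycle; the normalization $\iota^*_1\eta^!=\eta^!$ is again tautological; $d^*\eta^!=0$ follows from the explicit description \eqref{Poissoncochainforexteriorpolynomials} together with the fact that $d^*$ is the transpose of the de Rham differential $d$ on $\Omega^\bullet(A^!)$, combined with the degree/dimension consideration that $\eta^!$ sits in the top-degree component of the relevant graded piece; and the Poincar\'e duality $\iota^*_{(-)}\eta^!$ is supplied by Theorem \ref{thm:ZVOZ}. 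A second application of Lambre's theorem gives the BV algebra structure on $\mathrm{HP}^\bullet(A^!)$.

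The only step that is not a pure citation is the verification of $d^*\eta^!=0$ in the symmetric case; this I expect to be the main (but still modest) obstacle, and I would handle it by a short direct computation translating the dimensional argument that trivializes $d\eta$ in the polynomial case through the explicit dualities \eqref{Poissoncochainforpolynomials}--\eqref{Poissoncochainforexteriorpolynomials}.
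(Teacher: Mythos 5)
Your proposal is correct and follows essentially the same route as the paper: in each case one combines the differential calculus of Theorem \ref{DC-on-Poisson} (resp.\ Theorem \ref{DC-on-Poisson-dual}) with the Poincar\'e duality of Theorem \ref{thm:Xu} (resp.\ Theorem \ref{thm:ZVOZ}) to obtain a differential calculus with duality, and then applies Lambre's Theorem \ref{Thm_Lambre}. The paper leaves the verification of the volume-form axioms (such as $d\eta=0$ and $d^*\eta^!=0$) implicit, so your explicit checks are a welcome but not divergent addition.
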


\begin{proof}
If $A$ is unimodular Poisson, then Theorems \ref{DC-on-Poisson} and \ref{thm:Xu}
imply the pair $(\mathrm{HP}^\bullet(A), \mathrm{HP}_\bullet(A))$ is in fact a differential calculus with duality;
similarly, if $A$ is unimodular Frobenius Poisson, Theorem \ref{DC-on-Poisson-dual} and \ref{thm:ZVOZ}
$(\mathrm{HP}^\bullet(A), \mathrm{HP}^\bullet(A; A^*))$ is a differential calculus with duality.
The theorem then follows from Theorem \ref{Thm_Lambre}.
\end{proof}

\begin{proof}[Proof of Theorem \ref{thm:thirdtheorem}]

Note that in Theorem \ref{thm:secondtheorem},
the right vertical isomorphism preserves the K\"ahler differential as well as
the volume form,
that is, the two differential calculus with duality
$$
\big(\mathrm{HP}^{\bullet}(A),\mathrm{HP}_{\bullet}(A)\big)\;
\mbox{and} \;\big (\mathrm{HP}^{\bullet}(A^{!}),\mathrm{HP}^{\bullet}(A^{!};A^{\ac})\big)
$$
are isomorphic.
Combining with Corollary \ref{thm:Batalin-Vilkovisky},
the theorem follows.
\end{proof}

\begin{remark}
Not all quadratic Poisson algebras are unimodular.
For example, for $A=\mathbb R[x_1,x_2,x_3]$,
Etingof-Ginzburg \cite[Lemma 4.2.3 and Corollary 4.3.2]{EG10}
showed that any unimodular Poisson structure is of the form
$$
\{x,y\}=\frac{\partial\phi}{\partial z},\quad
\{y,z\}=\frac{\partial\phi}{\partial x},\quad
\{z,x\}=\frac{\partial\phi}{\partial y},$$
for some $\phi\in A$ (taking $\phi$ to be cubic then the Poisson structure is quadratic);
for $A=\mathbb{C}[x_1,x_2,x_3,x_4]$,
Pym \cite[\S3]{Pym} showed that
any unimodular quadratic Poisson bracket on $A$ may be written uniquely in the following form
$$
\{f,g\}:=\frac{df\wedge dg\wedge d\alpha}{dx_1\wedge dx_2\wedge dx_3\wedge dx_4}, \; f, g\in A,
$$
where $\alpha=\sum_{i=1}^{4} \alpha_i dx_i \in \Omega^{1}(A)$ such that $\alpha\wedge d\alpha=0$,
and $\alpha_i$'s are homogeneous cubic polynomials satisfying $\sum_{i=1}^{4}x_i \alpha_i=0$.
\end{remark}

\section{Calabi-Yau algebras}\label{sect:Calabi_Yau_algebras}

At the end of \S\ref{sect:Intro}
we sketched some analogy between
unimodular Poisson algebras
and Calabi-Yau algebras.
In the following two sections,
we study their relationships in more detail.

\subsection{Calabi-Yau algebras and
the Batalin-Vilkovisky algebra structure}\label{sect:CYandcalculus}

\begin{definition}[Calabi-Yau algebra; Ginzburg \cite{Ginzburg}]
Let $A$ be an associative algebra over $k$.
$A$ is called a {\it Calabi-Yau algebra of dimension $n$}
if
\begin{enumerate}
\item[$(1)$] $A$ is homologically smooth, that is, $A$, viewed as an $A^e$-module,
has a bounded resolution of finitely generated projective $A^e$-modules, and
\item[$(2)$] there is an isomorphism
\begin{equation}\label{CY_cond}
\mathrm{RHom}_{A^{e}}(A, A\otimes A)\cong \Sigma^{-n}A
\end{equation}
in the derived category $D(A^e)$ of $A^e$-modules.
\end{enumerate}
\end{definition}

In the above definition, $A^e$ is the enveloping algebra of $A$, namely $A^{e}:=A\otimes A^{\mathrm{op}}$.
There are a lot of examples of Calabi-Yau algebras,
such as the universal enveloping algebra
of semi-simple Lie algebras, the skew-product of complex polynomials
with a finite subgroup of $\mathrm{SL}(n,\mathbb R)$, the Yang-Mills algebras, etc.

We next study Van den Bergh's
noncommutative Poincar\'e duality for Calabi-Yau algebras
(\cite{VdB98}). To this end, we first recall the differential calculus structure for associative algebras.


For a unital associative algebra $A$, let $\bar A=A/k$ be its augmentation, and
$A\to\bar A: a\mapsto\bar a$ be the projection. Denote
by $(\bar{\mathrm C}^\bullet(A),\delta)$ and
$(\bar{\mathrm C}_\bullet(A), b)$
the reduced Hochschild cochain and chain complexes of $A$
(the reader may refer to Loday \cite{Loday}
for notations).
Recall that the {\it Gerstenhaber cup product} and the {\it Gerstenhaber bracket} on
$\bar{\mathrm{C}}^\bullet(A)$
are given as follows: for any $f\in \bar{\mathrm{C}}^{n}(A)$ and $g \in \bar{\mathrm{C}}^{m}(A)$,
$$
f\cup g(\bar{a}_1,\ldots, \bar{a}_{n+m})
:=(-1)^{nm}
f(\bar{a}_1, \ldots, \bar{a}_{n})g(\bar{a}_{n+1},\ldots,\bar{a}_{n+m}),
$$
and
$$
\{f,g\}:=f\circ g-(-1)^{(|f|+1)(|g|+1)}g\circ f,
$$
where
\begin{equation*}
f{\circ} g (\bar{a}_1,\ldots, \bar{a}_{n+m-1})
:=
\sum^{n-1}_{i=0}(-1)^{(|g|+1)i} f(\bar{a}_1, \ldots, \bar{a}_{i},
\overline{g(\bar{a}_{i+1},\ldots, \bar{a}_{i+m})}, \bar{a}_{i+m+1},\ldots, \bar{a}_{n+m-1}).
\end{equation*}

Gerstenhaber proved in \cite[Theorems 3-5]{Gerstenhaber} $\cup$ and $\{-,-\}$
are well-defined
on the cohomology level, and moreover, $\cup$ is graded commutative. Therefore we obtain
on the Hochschild cohomology $\mathrm{HH}^{\bullet}(A)$ a Gerstenhaber algebra structure.

Next, we consider the action of the Hochschild cochain complex on the Hochschild chain complex.
Given any homogeneous elements $f\in \bar{\mathrm{C}}^{n}(A)$
and $\alpha=(a_0,\bar{a}_1, \ldots, \bar{a}_m)\in \bar{\mathrm{C}}_{m}(A)$,
\begin{enumerate}
\item[$(1)$]
the \textit{cap product}
$
\cap: \bar{\mathrm{C}}^{n}(A)\times \bar{\mathrm{C}}_{m}(A)\to \bar{\mathrm{C}}_{m-n}(A)
$
is given by
\begin{equation}\label{def:cap1}
f \cap \alpha  :=\left\{
\begin{array}{cl}
(a_0f(\bar{a}_1,\ldots,\bar{a}_n),\bar{a}_{n+1},\ldots,\bar{a}_m),&\mbox{if}\; m\geq n\\
0,&\mbox{otherwise.}
\end{array}\right.
\end{equation}
If we denote by $\iota_{f}(-):= f\cap -$ the contraction operator,
then $\iota_{f}\iota_{g}=(-1)^{|f||g|}\iota_{g\cup f}=\iota_{f\cup g}$;

\item[$(2)$] the \textit{Lie derivative} $
L: \bar{\mathrm{C}}^{n}(A)\times \bar{\mathrm{C}}_{m}(A)\to \bar{\mathrm{C}}_{m-n}(A)
$ is given as follows:
for any $\alpha=(a_0, \bar{a}_1,\ldots,\bar{a}_m)\in \bar{\mathrm{C}}_{m}(A)$,
if $n\le m+1$, then
\begin{eqnarray*}
L_f(\alpha)
&:=&  \sum^{m-n}_{i=0}(-1)^{(n+1)i}(a_0,\bar{a}_1\cdots,\bar{a}_i, \overline{f(\bar{a}_{i+1},\cdots,\bar{a}_{i+n})}, \cdots, \bar{a}_m)\\
&+& \sum_{i=m-n+1}^{m}(-1)^{m(i+1)+n+1}
    (f(\bar{a}_{i+1}, \cdots, \bar{a}_m,\bar{a}_{0},\ldots,\bar{a}_{n-m+i-1}), \bar{a}_{n-m+i},\ldots, \bar{a}_{i}),
\end{eqnarray*}
where the second sum is taken over all cyclic permutations such that $a_0$ is inside of $f$,
and otherwise if $n>m+1$, $L_{f}(\alpha)=0$;

\item[$(3)$] the \textit{Connes operator}
$
B: \bar{\mathrm{C}}_{\bullet}(A)\to \bar{\mathrm{C}}_{\bullet+1} (A)
$
is given by
$$
B(\alpha)
:=\sum_{i=0}^{m} (-1)^{mi} (1,  \bar{a}_i, \cdots,   \bar{a}_m, \bar{a}_0,\cdots, \bar{a}_{i-1}).
$$
\end{enumerate}


The following two lemmas first appeared in Daletskii-Gelfand-Tsygan \cite{DGT},
which we learned from Tamarkin-Tsygan in \cite{TT05}.

\begin{lemma}\label{cap-Liede-lemma}
Keep the notations as in the above definition. Then
\begin{enumerate}
\item[$(1)$] $(\bar{\mathrm{C}}_{\bullet}(A),b,\cap)$ is a DG module
over $(\bar{\mathrm{C}}^{\bullet}(A), \delta, \cup)$,
that is,
\begin{equation*}
 \iota_{\delta f}=(-1)^{|f|+1}[b,\iota_f],\quad
\iota_f \iota_g=\iota_{f\cup g},
\end{equation*}
for any homogeneous elements $f, g\in \bar{\mathrm{C}}^{\bullet}(A)$;

\item[$(2)$] for any homogeneous elements $f, g\in \bar{\mathrm{C}}^{\bullet}(A)$,
\begin{equation*}
  [L_f,L_g] = L_{\{f,g\}},
\end{equation*}
and in particular $(-1)^{|f|+1} {[b,L_f]}+L_{{\delta f}}= 0$.
\end{enumerate}
\end{lemma}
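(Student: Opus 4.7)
The plan is to verify both identities by direct expansion from the explicit formulas for $\cap$, $\cup$, $L_f$, $\delta$, $b$, and the circle product $\circ$. These are classical Cartan-type identities (essentially due to Gerstenhaber and Rinehart, revisited by Daletskii-Gelfand-Tsygan), and the real content is a careful bookkeeping of signs under the Koszul sign rule.

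For part (1), the identity $\iota_f \iota_g = \iota_{f\cup g}$ is almost tautological: evaluating both sides on a chain $\alpha = (a_0, \bar a_1, \ldots, \bar a_m)$ with $m \ge |f|+|g|$, the sign $(-1)^{|f||g|}$ built into the definition of the cup product is designed precisely so that the two iterated contractions agree. For the Cartan-type formula $\iota_{\delta f} = (-1)^{|f|+1}[b, \iota_f]$, I would expand $b\,\iota_f(\alpha)$ and $\iota_f\, b(\alpha)$ using the Hochschild differential and group the summands. The inner multiplications $a_i a_{i+1}$ occurring in positions away from the slot occupied by $f$ appear with opposite signs in the two terms and cancel pairwise. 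The surviving terms are exactly those in which $a_i a_{i+1}$ either feeds two consecutive arguments of $f$ together, or multiplies the output of $f$ on either side, and they reassemble into $\iota_{\delta f}(\alpha)$ after accounting for the cyclic wrap-around term of $b$.

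For part (2), I would prove $[L_f, L_g] = L_{\{f,g\}}$ by expanding both sides against a generic chain. The cross terms in which the insertion site of $g$ falls inside the block occupied by $f$ (or vice versa) contribute exactly $L_{f\circ g}$ and $\pm L_{g\circ f}$, while the terms with disjoint insertion sites cancel between $L_f L_g$ and $(-1)^{(|f|+1)(|g|+1)} L_g L_f$. This mirrors Gerstenhaber's original pre-Lie computation at the cochain level. The in-particular identity $(-1)^{|f|+1}[b, L_f] + L_{\delta f} = 0$ is then obtained either by direct verification in the same style, or more conceptually by identifying $b$ with $L_\mu$ (up to sign) for $\mu$ the multiplication cochain and using $\delta f = \{\mu, f\}$, so that it reduces to the specialization $g = \mu$ of the bracket identity already established.

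The main obstacle throughout will be the sign bookkeeping, especially for the cyclic wrap-around sum in the definition of $L_f$: cancellations in $[L_f, L_g]$ must occur between terms coming from the straight part of $L_f$ composed with the cyclic part of $L_g$ and vice versa, which forces one to match permutation signs with Koszul signs systematically. Adopting a single indexed sum over insertion positions and cyclic offsets, and then collapsing by the obvious dihedral symmetry of the Hochschild chain, is what I expect to make the bookkeeping tractable.
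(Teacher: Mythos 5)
The paper does not actually prove this lemma: it is quoted from Daletskii--Gelfand--Tsygan \cite{DGT} (via Tamarkin--Tsygan \cite{TT05}) with no argument supplied, so there is no in-paper proof to match your attempt against. Your outline is the standard direct verification that those references carry out, and it is sound in structure: the module identity is a one-line computation from \eqref{def:cap1}, the Cartan-type identity $\iota_{\delta f}=(-1)^{|f|+1}[b,\iota_f]$ follows from the pairwise cancellation of the inner multiplications away from the slot of $f$, and $[L_f,L_g]=L_{\{f,g\}}$ is Gerstenhaber's pre-Lie cancellation transported to chains, with the cyclic wrap-around terms being the only genuinely delicate part. Two points deserve more care than your sketch gives them. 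First, the exact chain-level identity coming from \eqref{def:cap1} is $\iota_f\iota_g=(-1)^{|f||g|}\iota_{g\cup f}$ (as the paper itself records right after the definition); identifying this with $\iota_{f\cup g}$ uses graded commutativity of $\cup$, which holds only on cohomology, so the strict DG-module statement should be phrased with $g\cup f$, or one must say explicitly that the second equality is a cohomology-level identity. Second, your "more conceptual" derivation of $(-1)^{|f|+1}[b,L_f]+L_{\delta f}=0$ by setting $g=\mu$ in the bracket identity is morally correct but not literally available on the \emph{reduced} complex, since the multiplication $\mu$ is not a reduced cochain ($\mu(1,a)\neq 0$); one must either work in the unnormalized complex and pass to the quotient, or just do the direct verification you mention as the alternative. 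Neither issue is fatal, but both would need to be addressed in a written-out proof.
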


\begin{lemma}[Homotopy Cartan formulae]\label{Cartan formulas}
Suppose $\iota, L, B$ are given as above and $f, g\in \bar{\mathrm{C}}^{\bullet}(A)$ are any homogeneous elements.
\begin{enumerate}
\item[$(1)$]
Define an operation (cf. \cite[Equ. (3.5)]{TT05})
\begin{eqnarray*}
S_{f}(\alpha)&:=& \sum_{i=0}^{m-n} \sum_{j=i+n}^{m} (-1)^{\eta_{ij}}
 (1, \bar{a}_{j+1},\cdots, \bar{a}_{m},\bar{a}_0,\cdots,\bar{a}_i,
    \overline{f(\bar{a}_{i+1} ,\cdots,\bar{a}_{i+n})},
     \bar{a}_{i+n+1},\cdots, \bar{a}_{j})
\end{eqnarray*}
for any $\alpha=(a_0,\bar{a}_1,\cdots,\bar{a}_m)\in \bar{\mathrm{C}}_{m}(A;A)$
(the sum is taken over all cyclic permutations and $a_0$ always appears on the left of $f$),
where $\eta_{ij}:=(n+1)m+(m-j)m+(n+1)(j-i)$.
Then we have
\begin{equation}\label{homotopy Cartan formula1}
L_{f}=[B,\iota_f]+[b,S_f]-S_{\delta f}.
\end{equation}

\item[$(2)$] Define
\begin{eqnarray*}
T(f,g)(\alpha)&:=&\sum_{i=l-n+2}^l\sum_{j=0}^{n+i-l-2}(-1)^{\theta_{ij}}\\
&&(f(\bar{a}_{i+1},\cdots,\bar{a}_l,\bar{a}_0,\cdots,\bar{a}_j,
\overline{g(\bar{a}_{j+1},\cdots,\bar{a}_{j+m})},\cdots,\bar{a}_{n+m+i-l-2}),\cdots,\bar{a}_i)
\end{eqnarray*}
for any $\alpha=(a_0,\bar{a}_1,\cdots,\bar{a}_l)\in \bar{\mathrm{C}}_{l}(A;A)$,
where $\theta_{ij}=(m+1)(i+j+l)+l(i+1)$.
Then we have
\begin{equation}\label{homotopy Cartan formula2}
[L_{f},\iota_g]-(-1)^{|f|+1}\iota_{\{f,g\}}=[b,T(f,g)]-T(\delta f,g)-T(f,\delta g).
\end{equation}
\end{enumerate}
\end{lemma}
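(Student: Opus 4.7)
The plan is to verify both identities directly on the reduced Hochschild chain complex, by expanding each side as an explicit sum over insertions and cyclic permutations of bar tensors and matching the results term-by-term. These are classical chain-level formulas, first recorded by Daletskii-Gelfand-Tsygan and organized in this form by Tamarkin-Tsygan; the content of the argument is entirely combinatorial sign-bookkeeping, and there is no conceptual shortcut that avoids the expansion.

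For part (1), fix $\alpha = (a_0, \bar a_1, \ldots, \bar a_m) \in \bar{\mathrm C}_m(A;A)$ and $f \in \bar{\mathrm C}^n(A;A)$. First I would expand $[B,\iota_f](\alpha) = B\iota_f(\alpha) - (-1)^{|f|}\iota_f B(\alpha)$. The term $B\iota_f(\alpha)$ prepends a $1$ and then cyclically permutes the tensor $(a_0 f(\bar a_1,\ldots,\bar a_n), \bar a_{n+1},\ldots,\bar a_m)$; the term $\iota_f B(\alpha)$ applies $f$ to the first $n$ entries after each cyclic rotation of $\alpha$ with a $1$ inserted. Collecting these contributions, $[B,\iota_f]$ reproduces precisely the first (interior) sum in the definition of $L_f(\alpha)$, together with certain extra wrap-around terms in which $f$ is applied across the boundary but an inserted $1$ is still present. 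The remaining discrepancy between $L_f(\alpha)$ and $[B,\iota_f](\alpha)$ is by design equal to $[b,S_f](\alpha) - S_{\delta f}(\alpha)$: the part of $bS_f + S_f b$ that eliminates the inserted $1$ via multiplication with an adjacent entry recovers the missing cyclic wrap-around terms of $L_f$, while the components in which $b$ lands on the output of $f$ are exactly compensated by $S_{\delta f}$.

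For part (2), I would run the analogous argument with two cochains. Expand $L_f\iota_g(\alpha)$ and $(-1)^{(|f|+1)|g|}\iota_g L_f(\alpha)$ separately, split each into \emph{interior} contributions (where the actions of $f$ and $g$ occur at disjoint positions) and \emph{collision} contributions (where the output $\overline{g(\bar a_{j+1},\ldots,\bar a_{j+m})}$ is among the arguments that $f$ receives or vice versa). The interior contributions cancel between $L_f\iota_g$ and $\iota_g L_f$, leaving a collision sum which, after reorganization, equals $(-1)^{|f|+1}\iota_{\{f,g\}}(\alpha)$ modulo terms that can be written as the $b$-boundary of $T(f,g)$ together with the corrections $T(\delta f,g) + T(f,\delta g)$. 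The homotopy $T(f,g)$ is designed precisely to interpolate the "partial" wrap-arounds in which $f$ straddles the position of $g$ while $a_0$ has already been pushed to the left of $f$; its $b$-boundary reproduces the collision defect, and the $\delta$-corrections absorb the terms in which $b$ hits the output of $f$ or $g$ internally.

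The main obstacle will be the sign bookkeeping. Each of $\iota$, $L$, $B$, $b$, $S$ and $T$ carries its own Koszul signs depending on $|f|$, $|g|$, the insertion index, and the cyclic-shift index; in part (2) one must verify that the exponent $\theta_{ij} = (m+1)(i+j+l) + l(i+1)$ appearing in $T(f,g)$ matches every sign produced by applying $b$, $\delta f$ and $\delta g$, and likewise that the exponent $\eta_{ij}$ in part (1) synchronizes with the signs coming from $B$ and $b$. I would handle this by fixing an explicit convention for the cyclic shift once and for all, and then comparing exponents on each type of term (those where the Hochschild boundary multiplies two adjacent bar entries, those where it contracts with $a_0$, and those where it hits the output of $f$ or $g$) one class at a time; once the interior/collision split is made, each class reduces to a straightforward but lengthy verification of an identity of exponents modulo $2$.
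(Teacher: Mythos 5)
The paper does not actually prove this lemma: it is quoted verbatim from Daletskii--Gelfand--Tsygan and Tamarkin--Tsygan, and the authors simply cite \cite{DGT} and \cite[Equ.~(3.5)]{TT05}. So there is no in-paper argument to compare against, and your decision to attempt a direct chain-level verification is the only reasonable blind strategy. Your decomposition is the standard and correct one: for (1), $[B,\iota_f]$ reproduces the interior terms of $L_f$ while the cyclic wrap-around terms (where $f$ straddles the position of $a_0$) are generated by the part of $[b,S_f]$ that multiplies the inserted unit $1$ into an adjacent entry, with $S_{\delta f}$ absorbing the contributions where $b$ acts inside the arguments of $f$; for (2), the interior/collision split and the role of $T(f,g)$ as the homotopy controlling the configurations where $f$ straddles $g$ is likewise the right picture.

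That said, as a proof the proposal is only a roadmap. The entire content of the lemma is the claim that the \emph{specific} summation ranges ($i$ from $0$ to $m-n$, $j$ from $i+n$ to $m$ in $S_f$; $i$ from $l-n+2$ to $l$, $j$ from $0$ to $n+i-l-2$ in $T(f,g)$) and the \emph{specific} exponents $\eta_{ij}=(n+1)m+(m-j)m+(n+1)(j-i)$ and $\theta_{ij}=(m+1)(i+j+l)+l(i+1)$ make every term cancel; your write-up asserts this happens ``by design'' without checking a single class of terms. Since these identities are notoriously convention-sensitive (note, e.g., that the paper's $\cap$ in \eqref{def:cap1} carries no extra sign while its $\cup$ carries $(-1)^{nm}$, and the Lie derivative has a nonstandard second sum over cyclic permutations with $a_0$ inside $f$), the verification of the exponent identities modulo $2$ for each class of terms is not a formality one can wave at --- it is the proof. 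To close the gap you would need to either carry out at least one representative class of the sign computation in full and indicate why the others are analogous, or reduce the stated conventions to those of \cite{TT05} by an explicit change of signs and then invoke their Equation (3.5), which is in effect what the paper does.
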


The above two lemmas say that Definition \ref{def:diffcalculus}
(2) (3) hold up homotopy on the chain level. Together with Gerstenhaber's theorem, we have the following.

\begin{theorem}[Daletskii-Gelfand-Tsygan \cite{DGT}]\label{DGT thm}
Let $A$ be an associative algebra.
Then the following sextuple
$$
\big(\mathrm{HH}^{\bullet}(A),\mathrm{HH}_{\bullet}(A), \cup, \iota, \{-,-\}, B\big)
$$
is a differential calculus.\end{theorem}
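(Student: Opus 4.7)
The plan is to verify, in order, the three axioms in Definition~\ref{def:diffcalculus} for the sextuple $(\mathrm{HH}^{\bullet}(A),\mathrm{HH}_{\bullet}(A),\cup,\iota,\{-,-\},B)$, using the chain-level operations and the compatibility results already packaged in Lemmas~\ref{cap-Liede-lemma} and~\ref{Cartan formulas}. Axiom~(1), that $(\mathrm{HH}^{\bullet}(A),\cup,\{-,-\})$ is a Gerstenhaber algebra, is the classical theorem of Gerstenhaber recalled just before Lemma~\ref{cap-Liede-lemma}, so it requires nothing new at this stage.

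For axiom~(2), I would argue that the chain-level cap product descends to a graded module structure. Lemma~\ref{cap-Liede-lemma}(1) gives $\iota_f\iota_g=\iota_{f\cup g}$ and $\iota_{\delta f}=(-1)^{|f|+1}[b,\iota_f]$ on the chain level. The second identity shows that when $f$ is a Hochschild cocycle the operator $\iota_f$ is a chain map, and that modifying $f$ by a coboundary changes $\iota_f$ by a chain homotopy; hence $\cap$ descends to a well-defined, associative, unital action $\iota\colon\mathrm{HH}^{n}(A)\otimes\mathrm{HH}_m(A)\to\mathrm{HH}_{m-n}(A)$ on (co)homology.

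The crucial step is axiom~(3), which is where Lemma~\ref{Cartan formulas} does the real work. Setting $\widetilde L_f:=[B,\iota_f]$, we must prove $(-1)^{|f|+1}\iota_{\{f,g\}}=[\widetilde L_f,\iota_g]$ on (co)homology ($B^2=0$ is standard). First I would use identity~\eqref{homotopy Cartan formula1}, which reads $L_f=[B,\iota_f]+[b,S_f]-S_{\delta f}$, to conclude that for a cocycle $f$ the chain-level operator $L_f$ from Lemma~\ref{cap-Liede-lemma}(2) and $\widetilde L_f$ differ by the Hochschild boundary $[b,S_f]$, so they induce the same operator on $\mathrm{HH}_\bullet(A)$. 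Then identity~\eqref{homotopy Cartan formula2} reads $[L_f,\iota_g]-(-1)^{|f|+1}\iota_{\{f,g\}}=[b,T(f,g)]-T(\delta f,g)-T(f,\delta g)$; for cocycles $f,g$ the right-hand side collapses to the boundary $[b,T(f,g)]$, which vanishes in $\mathrm{HH}_\bullet(A)$. Combining these two observations gives the required Cartan identity on (co)homology.

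The main obstacle is really the second homotopy Cartan formula~\eqref{homotopy Cartan formula2}: producing the explicit operator $T(f,g)$ and verifying the identity on the nose at the chain level is a long combinatorial calculation involving cyclic rearrangements of tensor factors around the distinguished slot $a_0$. Since that identity has already been supplied as input in Lemma~\ref{Cartan formulas}, the proof of Theorem~\ref{DGT thm} itself reduces to the clean descent-to-(co)homology argument sketched above; the only genuine work left is the bookkeeping needed to match signs between $L_f$, $\widetilde L_f$, and the sign conventions of Definition~\ref{def:diffcalculus}.
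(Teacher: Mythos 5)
Your proposal is correct and follows essentially the same route as the paper, which likewise deduces the theorem by combining Gerstenhaber's classical result for axiom (1) with Lemma \ref{cap-Liede-lemma} for axiom (2) and the homotopy Cartan formulas of Lemma \ref{Cartan formulas} for axiom (3), all of which hold up to homotopy on the chain level and hence descend to (co)homology. Your explicit bridging of the chain-level Lie derivative $L_f$ with $[B,\iota_f]$ via \eqref{homotopy Cartan formula1} is exactly the intended descent argument, which the paper leaves implicit.
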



In \cite[Proposition 5.5]{dTdVVdB}, de Thanhoffer de V\"olcsey and Van den Bergh proved that,
for a Calabi-Yau algebra $A$ of dimension $n$, there exists a class $\eta\in\mathrm{HH}_n(A)$
such that the contraction
\begin{equation}\label{thm:dTdVVdB}
\xymatrixcolsep{2.5pc}
\xymatrix{
\mathrm{HH}^\bullet(A)\ar[r]^-{-\cap \eta}&
\mathrm{HH}_{n-\bullet}(A)
}
\end{equation}
is an isomorphism. This immediately implies the following:

\begin{theorem}[\cite{Ginzburg,Lambre}]\label{PDofVandenBergh}
Suppose $A$ is a Calabi-Yau algebra $A$ of dimension $n$.
Then
$$
(\mathrm{HH}^{\bullet}(A),\mathrm{HH}_{\bullet}(A), \cup, \iota, \{-,-\}, B)
$$
is a differential calculus with duality, and
in particular, $(\mathrm{HH}^\bullet(A),\cup, \Delta)$ is a Batalin-Vilkovisky algebra.
\end{theorem}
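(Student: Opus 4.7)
The plan is to assemble the conclusion from three inputs already available in the excerpt: the differential calculus of Daletskii--Gelfand--Tsygan (Theorem \ref{DGT thm}), the Van den Bergh/de Thanhoffer de V\"olcsey--Van den Bergh duality class, and Lambre's machinery (Theorem \ref{Thm_Lambre}) that upgrades a differential calculus with duality to a Batalin--Vilkovisky algebra. So the task is really to verify that the two pieces fit together into the sextuple plus volume form required by the definition of a differential calculus with duality, after which the BV statement is automatic.

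First, I would invoke Theorem \ref{DGT thm} to record that the sextuple
$\big(\mathrm{HH}^{\bullet}(A),\mathrm{HH}_{\bullet}(A),\cup,\iota,\{-,-\},B\big)$
is a differential calculus for any associative algebra $A$, in particular for the Calabi--Yau algebra $A$ at hand. Next, because $A$ is Calabi--Yau of dimension $n$, the result of de Thanhoffer de V\"olcsey--Van den Bergh cited just before the theorem produces a distinguished class $\eta\in\mathrm{HH}_n(A)$ such that the cap-product map
\[
\mathrm{PD}:=(-)\cap\eta\colon\mathrm{HH}^{\bullet}(A)\longrightarrow\mathrm{HH}_{n-\bullet}(A)
\]
is an isomorphism. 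This is condition (b) in the definition of differential calculus with duality, and it is precisely the Van den Bergh noncommutative Poincar\'e duality for Calabi--Yau algebras.

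To complete the verification it remains to check the two normalizations in condition (a): that $\iota_1\eta=\eta$ and that $B(\eta)=0$. The first is immediate from the definition of the cap product in \eqref{def:cap1}, since $1\in\mathrm{HH}^0(A)$ is represented by the unit $1_A\in A=\bar{\mathrm C}^0(A;A)$ and $1_A\cap(a_0,\bar a_1,\dots,\bar a_m)=(a_0,\bar a_1,\dots,\bar a_m)$. The closedness $B(\eta)=0$ may be arranged by choosing $\eta$ inside the image of the negative cyclic homology in Hochschild homology: the fundamental class supplied by \cite{dTdVVdB} lifts to a negative cyclic class, and any such class is automatically $B$-closed at the level of homology. (Alternatively, one can argue that in a differential calculus the relation $\iota_1\eta=\eta$ together with cyclic invariance of the fundamental class force $B(\eta)$ to act as a derivation on the unit, hence vanish.)

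With $(\mathrm{HH}^\bullet(A),\mathrm{HH}_\bullet(A),\cup,\iota,\{-,-\},B,\eta)$ thereby certified as a differential calculus with duality, Theorem \ref{Thm_Lambre} applies verbatim: defining the operator $\Delta$ on $\mathrm{HH}^\bullet(A)$ by the commutative diagram \eqref{Def-of-BV-operator}, that is, $\Delta:=\mathrm{PD}^{-1}\circ B\circ\mathrm{PD}$, yields a Batalin--Vilkovisky structure $(\mathrm{HH}^\bullet(A),\cup,\Delta)$ whose underlying Gerstenhaber bracket agrees with the Gerstenhaber bracket $\{-,-\}$ already present on $\mathrm{HH}^\bullet(A)$. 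The main obstacle I anticipate is purely bookkeeping around the volume form: one must be a little careful that the $\eta$ produced in \cite{dTdVVdB} (or equivalently, the fundamental class appearing in Van den Bergh duality) is genuinely $B$-closed and is normalized so that $\iota_1\eta=\eta$; no deep computation is required once this normalization is made, because all other axioms are supplied by the two black-box theorems and by Lambre's formalism.
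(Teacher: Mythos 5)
Your proposal is correct and follows essentially the same route as the paper: combine the Daletskii--Gelfand--Tsygan differential calculus with the de Thanhoffer de V\"olcsey--Van den Bergh class $\eta\in\mathrm{HH}_n(A)$ and then apply Lambre's theorem. The only difference is that you explicitly verify the normalization conditions $\iota_1\eta=\eta$ and $B(\eta)=0$ (the latter via the negative cyclic lift of the fundamental class), which the paper leaves implicit in its ``this immediately implies''; that is a welcome, and correct, addition.
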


\subsection{Symmetric Frobenius algebras and the Batalin-Vilkovisky algebra structure}\label{subsectionofcyclic}

We now recall a differential calculus structure on the Hochschild complexes of symmetric Frobenius algebras.


First, for an associative algebra $A$,
denote $A^*:=\mathrm{Hom}(A, k)$, which is an $A$-bimodule.
Denote by $\bar{\mathrm{C}}^\bullet(A; A^*)$ the reduced Hochschild
cochain complex of $A$ with values in $A^*$.
Then under the identity
\begin{equation}\label{Hochwithvalueindual}
\bar{\mathrm{C}}^\bullet(A; A^*)=\bigoplus_{n\geq 0}\mathrm{Hom}(\bar{A}^{\otimes n}, A^*)
=\bigoplus_{n\geq 0}\mathrm{Hom}(A\otimes \bar{A}^{\otimes n}, k)
=\mathrm{Hom}(\bar{\mathrm{C}}_\bullet(A), k),
\end{equation}
one may equip on $\bar{\mathrm{C}}^\bullet(A; A^*)$
the dual Connes differential, which is denoted by $B^*$, i.e., $B^{*}(g):=(-1)^{|g|}g\circ B$ for homogeneous $g \in \bar{\mathrm{C}}^\bullet(A; A^*)$.
$B^*$ commutes with the Hochschild coboundary map $\delta$, and thus is well-defined on the homology level.

Second, let
\begin{equation}\label{def:cap2}
\begin{array}{ccl}
\bar{\mathrm{C}}^{\bullet}(A)\times
\bar{\mathrm{C}}^{\bullet}(A; A^*)&\stackrel{\cap^{\ast}}{\longrightarrow}&
\bar{\mathrm{C}}^\bullet(A; A^*)\\
(f, \quad\quad \alpha)&\longmapsto&
\iota^{\ast}_{f}(\alpha):=(-1)^{|f||\alpha|}\alpha\circ \iota_f,
\end{array}
\end{equation}
for any homogeneouss $f\in \bar{\mathrm{C}}^{\bullet}(A)$ and $\alpha\in
\bar{\mathrm{C}}^{\bullet}(A; A^*)$.
We have the following.

\begin{theorem}\label{dual-Hoch-DC}
Let $A$ be an associative algebra.
Then
$$
(\mathrm{HH}^\bullet(A),\mathrm{HH}^\bullet(A; A^*),\cup,\iota^{\ast},\{-,-\}, B^*)
$$
is a differential calculus.
\end{theorem}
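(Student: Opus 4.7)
The plan is to deduce this result from the corresponding statement on the chain side (Theorem \ref{DGT thm}) by dualization. Under the identification
$$\bar{\mathrm{C}}^\bullet(A; A^*)=\mathrm{Hom}\bigl(\bar{\mathrm{C}}_\bullet(A;A),k\bigr),$$
the cochain complex $\bar{\mathrm{C}}^\bullet(A;A^*)$ is nothing but the linear dual of the Hochschild chain complex, and by construction $B^\ast$, $\iota^\ast_f$ are precisely the $k$-linear transposes of $B$ and $\iota_f$, with signs chosen so that $\langle \iota^\ast_f \alpha, x\rangle = (-1)^{|f||\alpha|}\langle \alpha, \iota_f x\rangle$. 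Thus every identity that holds between $b$, $B$, $\iota_{(-)}$ and $L_{(-)}$ on the chain complex transfers, up to signs, to an identity between $\delta$, $B^\ast$, $\iota^\ast_{(-)}$ and $L^\ast_{(-)}:=[B^\ast,\iota^\ast_{(-)}]$ on the cochain complex.

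First I would check condition (1) of Definition \ref{def:diffcalculus}: this is exactly Gerstenhaber's classical theorem that $(\mathrm{HH}^\bullet(A),\cup,\{-,-\})$ is a Gerstenhaber algebra, which is already recalled in the text. Next, for condition (2), I would dualize the module-structure identities from Lemma \ref{cap-Liede-lemma}(1). The equation $\iota_{\delta f}=(-1)^{|f|+1}[b,\iota_f]$ dualizes to a chain-level identity showing that $\iota^\ast_f$ commutes with the Hochschild coboundary up to the standard sign, so $\iota^\ast$ descends to cohomology; similarly $\iota_f\iota_g=\iota_{f\cup g}$ dualizes to $\iota^\ast_g\iota^\ast_f=\iota^\ast_{f\cup g}$, and combined with the graded commutativity of $\cup$ on cohomology this gives the module structure of $\mathrm{HH}^\bullet(A;A^\ast)$ over $(\mathrm{HH}^\bullet(A),\cup)$.

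For condition (3), the key input is the homotopy Cartan formula \eqref{homotopy Cartan formula2} of Lemma \ref{Cartan formulas}. Dualizing this identity gives
$$[L^\ast_f,\iota^\ast_g]-(-1)^{|f|+1}\iota^\ast_{\{f,g\}} = [\delta, T^\ast(f,g)] \pm T^\ast(\delta f,g)\pm T^\ast(f,\delta g),$$
where $T^\ast(f,g)$ is the transpose of the operator $T(f,g)$. Consequently, when $f$ and $g$ are Hochschild cocycles, the right-hand side is a coboundary, so the identity
$(-1)^{|f|+1}\iota^\ast_{\{f,g\}}=[L^\ast_f,\iota^\ast_g]$
holds on $\mathrm{HH}^\bullet(A;A^\ast)$. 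Together with $(B^\ast)^2=0$ (dual to $B^2=0$) and the dual of \eqref{homotopy Cartan formula1} ensuring $L^\ast_f$ is well-defined on cohomology, this verifies the third axiom.

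The main obstacle is purely bookkeeping: one has to check that the sign convention in the definition $\iota^\ast_f(\alpha)=(-1)^{|f||\alpha|}\alpha\circ\iota_f$ is consistent with the dual of the composition law $(\iota_f\iota_g)^\ast=\pm\iota^\ast_g\iota^\ast_f$ and with the definition $L^\ast_f=[B^\ast,\iota^\ast_f]$, so that the Cartan formula on the chain side dualizes to exactly the sign appearing in Definition \ref{def:diffcalculus}(3). Once the signs are matched once and for all, the proof reduces entirely to transposing Lemmas \ref{cap-Liede-lemma} and \ref{Cartan formulas}.
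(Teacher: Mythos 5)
Your proposal is correct and follows essentially the same route as the paper: the paper likewise defines $\iota^\ast_f(\alpha)=(-1)^{|f||\alpha|}\alpha\circ\iota_f$ and $L^\ast_f(\alpha)=(-1)^{|f||\alpha|+|\alpha|+1}\alpha\circ L_f=[B^\ast,\iota^\ast_f](\alpha)$ and verifies axioms (2) and (3) of Definition \ref{def:diffcalculus} by transposing Lemmas \ref{cap-Liede-lemma} and \ref{Cartan formulas}, carrying out exactly the sign bookkeeping you defer. The only cosmetic difference is that the paper writes out the computation $\iota^\ast_f\iota^\ast_g=\iota^\ast_{f\cup g}$ and $[L^\ast_f,\iota^\ast_g]=(-1)^{|f|+1}\iota^\ast_{\{f,g\}}$ explicitly rather than appealing to a general transposition principle.
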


\begin{proof}By the definition of differential calculus, we only need to show the last two equalities given
in Definition \ref{def:diffcalculus}.

(1)
By the definition of $\iota^{\ast}$ and Lemma \ref{cap-Liede-lemma} (1),
one has
\begin{eqnarray*}
\iota^{\ast}_f \iota^{\ast}_g(\alpha)
    &=& (-1)^{|g||\alpha|}\iota^{\ast}_f(\alpha\circ\iota_g)=(-1)^{|g||\alpha|+|f|(|\alpha|+|g|)}(\alpha\circ \iota_g)\circ\iota_f \\
    &=& (-1)^{|g||\alpha|+|f|(|\alpha|+|g|)} \alpha\circ (\iota_{g\cup f})=(-1)^{|f||g|}\iota^{\ast}_{g\cup f}\alpha
    = \iota^{\ast}_{f\cup g} (\alpha),
\end{eqnarray*}
for any homogenous elements $f,g\in \mathrm{HH}^{\bullet}(A)$ and $\alpha \in\mathrm{HH}^{\bullet}(A; A^*)$.
This means that the cap product is a left module action.

(2)
Given any homogenous elements $f\in \mathrm{HH}^{\bullet}(A)$ and $\alpha \in\mathrm{HH}^{\bullet}(A; A^*)$,
define
\begin{equation}\label{LieactiononHochschildcochain}
L^{\ast}_{f}(\alpha):=(-1)^{|f||\alpha|+|\alpha|+1} \alpha\circ L_f (=[B^{\ast},\iota^{\ast}_{f}](\alpha)),
\end{equation}
and by Lemma \ref{Cartan formulas} one has
\begin{eqnarray*}
  [L^{\ast}_{f},\iota^{\ast}_{g}](\alpha)
   &=& (L^{\ast}_{f}\iota^{\ast}_{g}-(-1)^{(|f|+1)|g|}\iota^{\ast}_{g}L^{\ast}_{f})(\alpha)\\
   &=&(-1)^{(|f|+1)(|\alpha|+|g|)+|g||\alpha|+1}\alpha\circ(\iota_{g}L_{f})-(-1)^{(|f|+|g|+1)|\alpha|+1} \alpha\circ (L_{f}\iota_{g})\\
   &=&(-1)^{(|f|+|g|+1)|\alpha|} \alpha\circ ([L_{f},\iota_{g}])\\
   &=&(-1)^{(|f|+|g|+1)|\alpha|} \alpha\circ ((-1)^{|f|+1}\iota_{\{f,g\}})\\
   &=&(-1)^{|f|+1}\iota^{\ast}_{\{f,g\}}(\alpha).
\end{eqnarray*}
This completes the proof.
\end{proof}


Now suppose $A^!$ is a symmetric Frobenius algebra. Recall that
the existence of the degree $n$ cyclic pairing
is equivalent to an isomorphism
$$
\eta: A^!\cong \Sigma^{-n}A^{\ac}
$$
as $A^!$-bimodules.
Such $\eta$ may be viewed
as an element in
$\bar{\mathrm{C}}^{-n}(A^!; A^{\ac})$,
which is a cocycle, and hence
represents a cohomology class.
By abuse of notation, this class is also
denoted by $\eta$. The following map
\begin{multline}
-\cap^{\ast}\eta: \bar{\mathrm{C}}^{\bullet}(A^!)
=\bigoplus_{q\geq 0}
\mathrm{Hom}((\bar{A^!})^{\otimes q}, A^!)
\\
\stackrel{\eta\circ-}{\longrightarrow}
\bigoplus_{q\geq 0} \mathrm{Hom}((\bar{A^!})^{\otimes q}, \Sigma^{-n}A^{\ac})=\bar{\mathrm{C}}^{\bullet-n}(A^!; A^{\ac}),\label{def:capwithvol}
\end{multline}
where $\eta\circ-$ means composing with $\eta$, gives
an isomorphism on the cohomology (due to Tradler \cite{Tradler}). Thus we have the following.

\begin{theorem}[\cite{Lambre,Tradler}]\label{thm:dualityofTradler}
Suppose $A^!$ is a symmetric Frobenius algebra of degree $n$.
$$
(\mathrm{HH}^\bullet(A),\mathrm{HH}^\bullet(A; A^*),\cup,\iota^{\ast},\{-,-\}, B^*)
$$
is a differential calculus with duality, and in particular, $\mathrm{HH}^\bullet(A^!)$ is a Batalin-Vilkovisky algebra.
\end{theorem}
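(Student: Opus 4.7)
The plan is to read this theorem as a direct consequence of Lambre's Theorem \ref{Thm_Lambre} once a volume form is exhibited, since Theorem \ref{dual-Hoch-DC} already supplies the differential calculus structure on $\big(\mathrm{HH}^{\bullet}(A^!),\mathrm{HH}^{\bullet}(A^!;A^{\ac})\big)$. Thus only two ingredients remain to be supplied: a cohomology class $\eta$ with $\iota^{\ast}_1\eta=\eta$ and $B^{\ast}\eta=0$, together with the non-degeneracy of the map $\iota^{\ast}_{(-)}\eta$ on cohomology.

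For the volume form, I will use the $A^!$-bimodule isomorphism $\eta\colon A^!\to\Sigma^{-n}A^{\ac}$ provided by the symmetric pairing. Reading $\eta$ as a length-zero Hochschild cochain in $\bar{\mathrm{C}}^{-n}(A^!;A^{\ac})$, the bimodule property forces $\delta(\eta)=0$, so it defines a class $[\eta]\in\mathrm{HH}^{-n}(A^!;A^{\ac})$, i.e., a degree-$n$ class under the cohomological indexing of the paper. The identity $\iota^{\ast}_1\eta=\eta$ is immediate from \eqref{def:cap2} because $\iota_1$ acts as the identity on $0$-cochains, while $B^{\ast}\eta=\pm\,\eta\circ B$ vanishes already at the chain level, since $B$ produces tensors of positive length whereas $\eta$ is supported in length zero.

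The main obstacle, and the only nontrivial step, is to establish that
\[
 -\cap^{\ast}\eta\colon\ \mathrm{HH}^{\bullet}(A^!)\longrightarrow\mathrm{HH}^{\bullet-n}(A^!;A^{\ac})
\]
is an isomorphism. This is precisely Tradler's duality theorem for symmetric (Frobenius) algebras \cite{Tradler}: post-composition with $\eta$ turns any $A^!$-valued cochain into an $A^{\ac}$-valued cochain, and a chain-level inverse is built from the non-degeneracy of the pairing. I would invoke this result directly rather than reprove it, as the verification, though routine, is lengthy and has been written out carefully in \cite{Tradler}; see also \cite{Lambre}.

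Once the volume form and the Poincar\'e duality map are in place, the sextuple of Theorem \ref{dual-Hoch-DC} upgrades to a differential calculus with duality. Lambre's Theorem \ref{Thm_Lambre} then produces a BV operator $\Delta$ on $\mathrm{HH}^{\bullet}(A^!)$, characterized by the commutative square \eqref{Def-of-BV-operator} with $d$ replaced by $B^{\ast}$, thereby promoting $(\mathrm{HH}^{\bullet}(A^!),\cup,\Delta)$ to a Batalin-Vilkovisky algebra as required.
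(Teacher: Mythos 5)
Your proposal is correct and follows essentially the same route as the paper: the paper also obtains the differential calculus from Theorem \ref{dual-Hoch-DC}, views the bimodule isomorphism $\eta$ as a cocycle in $\bar{\mathrm{C}}^{-n}(A^!;A^{\ac})$ whose cap product \eqref{def:capwithvol} is a cohomology isomorphism by Tradler's theorem, and then concludes via Theorem \ref{Thm_Lambre}. Your explicit verification of the axioms $\iota^{\ast}_1\eta=\eta$ and $B^{\ast}\eta=0$ is a small addition the paper leaves implicit, but it is correct and does not change the argument.
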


\begin{remark}\label{Gerstenhabercomodule}
Suppose
$
(\mathrm{H}^\bullet,\mathrm{H}_\bullet,\cup,\iota,\{-,-\}, B)
$
is a differential calculus, then $\iota$ and the Lie derivative 
$L=[\iota, B]$ is nothing but saying that $\mathrm H_\bullet$ is a Gerstenhaber module
over $\mathrm H^\bullet$.
From this point of view, the two differential calculus structures given in Theorems
\ref{thm:ZVOZ}
and \ref{thm:dualityofTradler}
can be understood in the following way:
Since $(\mathrm{HP}^\bullet(A^!), \mathrm{HP}_\bullet(A^!))$ already forms
a differential calculus and $\mathrm{HP}^\bullet(A^!; A^{\ac})$
is the linear dual of $\mathrm{HP}_\bullet(A^!)$ (see \eqref{vectorfieldswithvalueindual}),
the Gerstenhaber module structure on $\mathrm{HP}^\bullet(A^!; A^{\ac})$
is exactly the dual (or say adjoint) of Gerstenhaber module structure
on $\mathrm{HP}_\bullet(A^!)$. 
Analogously, by \eqref{Hochwithvalueindual},
$\mathrm{HH}^\bullet(A^!; A^{\ac})$ is the linear dual of $\mathrm{HH}_\bullet(A^!)$,
and thus the differential calculus structure
on $(\mathrm{HH}^\bullet(A^!), \mathrm{HH}^\bullet(A^!; A^{\ac}))$
can also be understood from this point of view.
\end{remark}

\subsection{Koszul Calabi-Yau algebras and Rouquier's conjecture}
Analogously to the quadratic Poisson algebra case,
the Koszul dual of a Koszul Calabi-Yau algebra is symmetric Frobenius
(chronologically this fact is discovered first), and we have the following
theorem due to Van den Bergh
(see \cite[Theorem 9.2]{VdB97} or \cite[Proposition 28]{CYZ} for a proof):
Suppose $A$ is a Koszul algebra and let $A^{!}$ be its Koszul dual algebra.
Then $A$ is Calabi-Yau of dimension $n$
if and only if $A^{!}$ is symmetric Frobenius of degree $n$.

It has been well-known that for a Koszul algebra, say $A$,
$$
\mathrm{HH}^\bullet(A)\cong\mathrm{HH}^\bullet(A^{!}),
$$
as Gerstenhaber algebras,
and Rouquier conjectured (it is stated in Ginzburg \cite{Ginzburg}) that,
for a Koszul Calabi-Yau algebra,
the above two Batalin-Vilkovisky are isomorphic, which turns out to be true
(see \cite[Theorem A]{CYZ} for a proof):

\begin{theorem}[Rouquier's conjecture]\label{thm:conjofRouquier}
Suppose $A$ is a Koszul Calabi-Yau algebra. Denote by $A^!$ and by $A^{\ac}$ the Koszul
dual algebra and coalgebra of $A$ respectively.
Then
$$
\big(\mathrm{HH}^\bullet(A),\mathrm{HH}_\bullet(A)\big)\quad
\mbox{and}\quad
\big(\mathrm{HH}^\bullet(A^!),\mathrm{HH}^\bullet(A^!; A^{\ac})\big)
$$
are isomorphic as differential calculus with duality. In particular,
$
\mathrm{HH}^\bullet(A)
$
and
$
\mathrm{HH}^\bullet(A^!)
$
are isomorphic as Batalin-Vilkovisky algebras.
\end{theorem}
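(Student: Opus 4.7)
The plan is to build the isomorphism at the chain level by exploiting a small model for the Hochschild complexes coming from the Koszul bimodule resolution, and then transport successively the Gerstenhaber structure, the contraction/module structure, the Connes differential, and finally the volume class. Concretely, I would start from the Koszul bimodule resolution $K_\bullet(A) = A \otimes A^{\ac} \otimes A \twoheadrightarrow A$ (and its analogue for $A^!$), and use it to identify
$$
\mathrm{CH}^\bullet(A) \simeq \mathrm{Hom}(A^{\ac}, A), \qquad \mathrm{CH}_\bullet(A) \simeq A \otimes A^{\ac},
$$
together with the symmetric statements
$$
\mathrm{CH}^\bullet(A^!) \simeq \mathrm{Hom}(A, A^!), \qquad \mathrm{CH}^\bullet(A^!; A^{\ac}) \simeq \mathrm{Hom}(A, A^{\ac}).
$$
Since $A$ and $A^{\ac}$ are linear duals of $A^!$ and $A$ respectively (in each homological degree, using the finite-dimensionality of $(A^!)_i$ and $(A^{\ac})_i$), dualizing gives identifications between the two pairs of small complexes, producing the candidate chain-level maps inducing the two isomorphisms asserted.

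Next, I would verify compatibility with the Gerstenhaber algebra structure. The cup product on the small Koszul model of $\mathrm{HH}^\bullet(A)$ admits an explicit description as a convolution induced by the coproduct on $A^{\ac}$ and the product on $A$; swapping these two roles gives precisely the cup product on the small model for $\mathrm{HH}^\bullet(A^!)$. The compatibility with the Gerstenhaber bracket then follows from well-known Koszul-duality arguments for quadratic algebras. In parallel, the module action of cochains on chains on either side is, under the small-model identification, just evaluation, and the two actions match tautologically under the swap.

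The main obstacle, as with Rouquier's conjecture in general, is compatibility with the Connes differential $B$ on the chain side and its dual $B^*$ on the cochain side under Koszul duality. The Connes operator is naturally defined on the bar complex but does not descend in an obvious way to the small Koszul model; one must either construct a cyclic lift of $B$ on the Koszul model or produce an explicit chain homotopy showing that the candidate chain map constructed above intertwines $B$ and $B^*$ only up to a boundary. This is the technical heart of the argument (carried out in \cite{CYZ}), and it is what upgrades a mere isomorphism of Gerstenhaber algebras to one of full differential calculi.

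Finally, I would match the volume classes. The Calabi-Yau condition \eqref{CY_cond} yields a distinguished class $\eta \in \mathrm{HH}_n(A)$ implementing Van den Bergh duality, while the symmetric structure on $A^!$ yields the class $\eta^! \in \mathrm{HH}^{-n}(A^!; A^{\ac})$ implementing Tradler's duality (Theorem \ref{thm:dualityofTradler}). Under Van den Bergh's theorem identifying Calabi-Yau dimension $n$ for a Koszul algebra with symmetric degree $n$ for its Koszul dual, both classes are represented by the top element of the Koszul complex, so the chain map above carries $\eta$ to $\eta^!$ up to a nonzero scalar. Combined with the previous steps this upgrades to an isomorphism of differential calculi with duality, and Theorem \ref{Thm_Lambre} then transports the BV operator \eqref{Def-of-BV-operator} from one side to the other, giving the asserted isomorphism of Batalin-Vilkovisky algebras.
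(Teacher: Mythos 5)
Your proposal follows essentially the same route as the paper, which itself only sketches the argument and defers the full proof to \cite{CYZ}: both identify the Hochschild (co)chain complexes with the small Koszul models $A\otimes A^!$ and $A\otimes A^{\ac}$, match the cup products, contractions and volume forms on these middle terms, and leave the compatibility with the Connes differential as the technical step carried out in \cite{CYZ}. Your honest flagging of the Connes-operator compatibility as the genuine difficulty, and your use of Van den Bergh's Koszul/symmetric correspondence to match the duality classes before invoking Theorem \ref{Thm_Lambre}, are exactly the points the paper emphasizes.
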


The key point of the proof is that,
with the differentials properly assigned on $A\otimes A^!$ and $A\otimes A^{\ac}$ respectively, then
$$\bar{\mathrm{C}}^\bullet(A; A)\simeq A\otimes A^! \simeq\bar{\mathrm{C}}^\bullet(A^!; A^!)\quad\mbox{and}\quad
\bar{\mathrm{C}}_\bullet(A; A)\simeq A\otimes A^{\ac} \simeq\bar{\mathrm{C}}^\bullet(A^!; A^{\ac}),$$
and via these quasi-isomorphisms, the volume forms as well as
the contractions given by \eqref{def:cap1} and \eqref{def:cap2}
are identical on the above middle terms (compare with the proof of Theorem \ref{thm:secondtheorem}).

\begin{example}[The polynomial case]\label{exampleofCY}
Let $A=\mathbb R[x_1,x_2,\cdots,x_n]$, which is $n$-Calabi-Yau.
Its Koszul dual algebra $A^!=\mathbf\Lambda(\xi_1,\xi_2,\cdots, \xi_n)$
is symmetric Frobenius. As in the Poisson case,
the volume classes on $\mathrm{HH}_\bullet(A)$ and $\mathrm{HH}^\bullet(A^!; A^{\ac})$
are, via the above quasiisomorphisms, represented
by
$1\otimes\xi_1^*\cdots\xi_n^*$ in $A\otimes A^{\ac}$.
\end{example}

We would like to summarize some results of the previous two
subsections in terms of DG Lie algebras analogous to
the ones given by \eqref{DGLAofPoisson}
and \eqref{DGLAofFrobeniusPoisson}.

For an $n$-Calabi-Yau algebra $A$ with volume form $\eta$,
$(0,\Sigma^{-1-n}\eta)$ is a Maurer-Cartan element of the following
DG Lie algebra of semi-direct product
\begin{equation}\label{DGLAofCalabiYau}
\mathfrak D(A)^{\#}:=\Sigma\bar{\mathrm{C}}^\bullet(A)\ltimes
\Sigma^{-1-n}\overline{\mathrm{CC}}_\bullet^{-}(A).
\end{equation}
Let
$$
\mathfrak D(A,\eta):=\mathfrak D(A)^{\#}_{(0,\Sigma^{-1-n}\eta)},
$$
then it is a DG Lie algebra, and will be studied in the next section.

For a symmetric Frobenius algebra $A^!$ with volume form $\eta^!$,
we similarly have the
DG Lie algebra
\begin{equation}
\bar{\mathfrak D}^{\circ}(A^!)^{\#}:=\Sigma\bar{\mathrm{C}}^\bullet(A^!)\ltimes
\Sigma^{-1-n}\overline{\mathrm{CC}}^\bullet(A^!),
\end{equation}
and
$(0,\Sigma^{-1-n}\eta^!)$ is a Maurer-Cartan element.
However, this is not exactly
the DG Lie algebra that we will discuss in the next section.
In fact, let us first
consider  the Connes cyclic cochain complex
$\mathrm{CC}^\bullet_{\lambda}(A^!)$,
which is a cyclically invariant subcomplex of $\mathrm C^\bullet(A^!)$,
the linear dual of the Hochschild
chain complex of $A$ (recall that it is identified with $\mathrm{C}^\bullet(A; A^*)$).
It is then a direct check that
$
\mathrm{CC}^\bullet_{\lambda}(A^!)
$
is closed under the Lie derivative of
$\bar{\mathrm{C}}^\bullet(A^!)$,
and hence
\begin{equation}\label{DGLAofFrobenius}
\mathfrak D^{\circ}(A^!)^{\#}:=\Sigma\bar{\mathrm{C}}^\bullet(A^!)\ltimes
\Sigma^{-1-n}\mathrm{CC}^\bullet_{\lambda}(A^!)
\end{equation}
is a DG Lie algebra.
Since $\eta^!$ is a cyclically invariant inner product of $A^!$,
$(0,\Sigma^{-1-n}\eta^!)$ is a Maurer-Cartan element of this DG Lie algebra.
Observing that $\overline{\mathrm{CC}}^\bullet(A^!)$ is quasiisomorphic
to the Connes cyclic cochain complex
$\mathrm{CC}^\bullet_{\lambda}(A^!)$ (see Loday \cite[\S2.4]{Loday} for more details),
which is compatible with the Lie derivative actions,
we thus have a quasiisomorphism of DG Lie algebras
$$\bar{\mathfrak D}^{\circ}(A^!)^{\#}\simeq\mathfrak D^{\circ}(A^!)^{\#}.$$
In the following, we write
\begin{equation}\label{DGLAofFrobeniustwisted}
\mathfrak D(A,\eta):=\mathfrak D(A)^{\#}_{(0,\Sigma^{-1-n}\eta)}
\quad\mbox{and}\quad
\mathfrak D^{\circ}(A^!,\eta):=\mathfrak D^{\circ}(A^!)^{\#}_{(0,\Sigma^{-1-n}\eta^!)}.
\end{equation}

\section{Deformation quantization}\label{sect:connections}

In this section, we take $k$ to be a field containing $\mathbb R$.
Dolgushev \cite[Theorem 3]{Dolgushev}
proved that for a Calabi-Yau algebra,
if it is unimodular Poisson, then its
deformation quantization is again Calabi-Yau.
Analogously,
Felder-Shoikhet \cite[Corollary 1]{FS} and
Willwacher-Calaque \cite[Theorem 37]{WC}
proved that for a symmetric Frobenius algebra, if it is
unimodular Frobenius Poisson,
then its deformation quantization is again symmetric Frobenius.
We use their results to
prove Theorems \ref{thm:fifttheorem} and \ref{thm:fourththeorem}.

The following proposition is a rephrase of the results of \S\ref{sect:unimodular_Poisson}
for $A[\![\hbar]\!]$
(see Propositions \ref{prop:altofunimodular}
and \ref{altdefofsymmetricunimodularity}): 

\begin{proposition}\label{altdefofunimodularity}
\textup{(1)} Let $A=k[x_1,\cdots, x_n]$ and $\hbar$ be a formal variable.
For the algebra $A[\![\hbar]\!]$ over $k[\![\hbar]\!]$ together with a bivector
$$
\pi_\hbar:=\hbar\cdot \pi_0+\hbar^2\cdot \pi_1+\cdots\in \hbar\cdot\mathfrak X^{-2}(A[\![\hbar]\!])
$$
and
an $n$-form
$$
\eta_\hbar:=\hbar\cdot \eta_1+\hbar^2\cdot \eta_2+\cdots\in\hbar\cdot\Omega^n(A[\![\hbar]\!]),
$$
the pair $(\pi_\hbar, \eta_0+\eta_{\hbar})$
gives on $A[\![\hbar]\!]$ a unimodular Poisson structure if and only if
$(\Sigma\pi_\hbar, \Sigma^{-1-n}\eta_\hbar)$ is a Maurer-Cartan element
of the DG Lie algebra $\mathfrak P(A[\![\hbar]\!],\eta_0)$.

\textup{(2)} Suppose $A^!=\mathbf\Lambda(\xi_1,\cdots, \xi_n)$ with
volume form $\eta^!_0$.
Then for a bivector $\pi^!_\hbar\in\hbar\cdot\mathfrak X^{-2}(A^![\![\hbar]\!])$
and an $n$-form $\eta^!_\hbar\in\hbar\cdot \mathfrak X^\bullet(A^![\![\hbar]\!]; A^{\ac}[\![\hbar]\!])$,
the pair
$(\pi^!_\hbar, \eta^!_0+\eta^!_\hbar)$ gives an unimodular Frobenius
Poisson structure on $A^![\![\hbar]\!]$
if and only if $(\Sigma\pi^!_\hbar, \Sigma^{-1-n}\eta^!_\hbar)$ is
a Maurer-Cartan element of the DG Lie algebra
$
\mathfrak P^{\circ}(A^![\![\hbar]\!], \eta^!_0).
$
\end{proposition}

For Calabi-Yau algebras and symmetric Frobenius algebras, we have similar results 
(see \eqref{DGLAofCalabiYau}-\eqref{DGLAofFrobeniustwisted}),
due to de Thanhoffer de V\"olcsey-Van den Bergh \cite{dTdVVdB}
and Terilla-Tradler \cite{TerillaTradler} respectively (the interested reader may refer to
these two works for proofs):

\begin{proposition}\label{altdefofCY}
\textup{(1)} \textup{(\cite[Theorem 8.1]{dTdVVdB})}
Suppose $A$ is an $n$-Calabi-Yau algebra with
multiplication $\mu_0$ and volume form $\eta_0$. Then
an element $\mu_\hbar\in \hbar\cdot\bar{\mathrm{C}}^{-2}(A[\![\hbar]\!])$ and
an $n$-form $\eta_\hbar\in\hbar\cdot \overline{\mathrm{CC}}_n^{-}(A[\![\hbar]\!])$
such that $(\mu_0+\mu_\hbar, \eta_0+\eta_\hbar)$ gives a Calabi-Yau structure on $A[\![\hbar]\!]$
if and only if
$
(\Sigma\mu_\hbar,\Sigma^{-1-n}\eta_\hbar)
$
is a Maurer-Cartan element of the DG Lie algebra
$\mathfrak D(A[\![\hbar]\!],\eta_0).$

\textup{(2)} \textup{(\cite[Theorem 3.7]{TerillaTradler})}
Suppose $A^!$ is a symmetric Frobenius algebra with volume $n$-form $\eta^!_0$. Then
an element $\mu^!_\hbar\in \hbar\cdot \bar{\mathrm{C}}^2(A^![\![\hbar]\!])$ and
an $n$-form $\eta^!_\hbar\in\hbar\cdot  {\mathrm{CC}}^n_{\lambda}(A^![\![\hbar]\!])$
such that $(\mu^!_0+\mu^!_\hbar, \eta^!_0+\eta^!_\hbar)$ gives a symmetric Frobenius algebra structure on $A^![\![\hbar]\!]$
if and only if
$
(\Sigma\mu^!_\hbar,\Sigma^{-1-n}\eta^!_\hbar)
$
is a Maurer-Cartan element of the DG Lie algebra
$
\mathfrak D^\circ(A^![\![\hbar]\!],\eta^!_0).
$
\end{proposition}

In fact, in both works, the authors also showed that the DG Lie algebras appeared
in the above proposition
are quasiisomorphic, up to a degree shift, to the negative cyclic chain complex
and the cyclic cochain complex respectively.


\subsection{Deformation quantization of Calabi-Yau Poisson algebras}

In this subsection we prove Theorem \ref{thm:fifttheorem} (1).

Recall that for a Poisson algebra $A$ with bracket $\{-,-\}$,
its {\it deformation quantization}, denoted by $A_{\hbar}$,
is a $k[\![\hbar]\!]$-linear associative product (called the {\it star-product}) on $A[\![\hbar]\!]$
$$
a\ast b=a\cdot b+\mu_1(a,b)\hbar+\mu_2(a,b)\hbar^2+\cdots,
$$
where $\hbar$ is the formal parameter and $\mu_i$ are bilinear operators,
satisfying
$$
\lim_{\hbar\to 0}\frac{1}{\hbar}\left(a\ast b-b\ast a\right)=\{a,b\},\quad\mbox{for all}\; a, b\in A.
$$
In \cite{Kontsevich},
Kontsevich constructed,
for $A$ being the algebra of smooth functions on a Poisson manifold,
an explicit $L_\infty$-quasiisomorphism
from the space of polyvector fields to the Hochschild cochain complex of $A$,
and therefore
there is a one-to-one correspondence between
the equivalence classes of star-products and
the equivalence classes of Poisson algebra structures on
$A[\![\hbar]\!]$.
Thus
via Kontsevich's map, the Poisson bivector $\hbar\pi$ on $A[\![\hbar]\!]$
gives a star-product on $A[\![\hbar]\!]$, which is called {\it Kontsevich's
deformation quantization}.






Note that $\Omega^\bullet(A)$ and $\bar{\mathrm{C}}_\bullet(A)$
are modules over $\mathfrak X^\bullet(A)$ and over $\bar{\mathrm{C}}^\bullet(A)$ respectively,
and in \cite[Conjecture 5.3.2]{Tsygan}, Tsygan conjectured that Kontsevich's deformation
quantization also gives an $L_\infty$-quasiisomorphism of $L_\infty$-modules
between
$
\bar{\mathrm{C}}_\bullet(A)
$
and
$\Omega^\bullet(A)$.
This is known as Tsygan's Formality Conjecture for chains, and is proved
by Shoikhet in \cite[Theorem 1.3.1]{Shoikhet0}.
Shoikhet also conjectured that such $L_\infty$-morphism is also compatible with
the cap product, which was later proved by Calaque and Rossi in \cite[Theorem A]{CR}.

Recall that on $\Omega^\bullet(A)$ and $\bar{\mathrm{C}}_\bullet(A)$, we
have the de Rham differential operator and the Connes boundary operator respectively.
One naturally expects the $L_\infty$-quasiisomorphism constructed
above respects these two operators. This is known as the Cyclic Formality Conjecture
for chains, and is proved by Willwacher in \cite[Theorem 1.3 and Corollary 1.4]{Willwacher}.

With the above results, one obtains the following theorem, due to Dolgushev
\cite[Theorem 3]{Dolgushev} (see also \cite[(1.3)]{dTdVVdB}), whose proof is therefore only sketched:


\begin{theorem}\label{thm:dolgushev}
Let $A=k[x_1,\cdots,x_n]$ be a Poisson algebra.
Then the deformation quantization of $A$ is Calabi-Yau if and only if $A$ is
unimodular.
\end{theorem}

\begin{proof}[Sketch of proof]
Denote by $\mathfrak U$ and $\mathfrak S$ the $L_\infty$-quasiisomorphism
of Kontsevich and of Willwacher respectively, then the works of
\cite{Kontsevich,Willwacher}
are equivalent to saying that there exists
a roof of
$L_\infty$-quasiisomorphisms
$$
\xymatrixcolsep{-20pt}
\xymatrix{
&\Sigma\mathfrak X^\bullet(A[\![\hbar]\!])\ltimes\Sigma^{-1-n}\overline{\mathrm{CC}}_\bullet^{-}(A[\![\hbar]\!])\ar[ld]_{id\times \mathfrak S}\ar[rd]^{\mathfrak U\times id}&\\
\Sigma\mathfrak X^\bullet(A[\![\hbar]\!])\ltimes\Sigma^{-1-n} \Omega^\bullet(A[\![\hbar]\!])[\![u]\!]
&&\Sigma^{}\overline{
\mathrm{C}}^\bullet(A[\![\hbar]\!])\ltimes\Sigma^{-1-n}\overline{\mathrm{CC}}_\bullet^{-}(A[\![\hbar]\!])
} $$
of DG Lie algebras (see \cite[\S11.3]{dTdVVdB} for a proof).

Recall that from Example \ref{exampleofCY} the volume forms in the three DG Lie
modules are the same on the homology level.
Twisting the differentials with the corresponding volume forms
in each of the DG Lie algebra in the above roof we get a new roof
of $L_\infty$-quasiisomorphisms.
This then implies that we have an $L_\infty$-quasiisomorphism of DG Lie algebras
$$
\xymatrix{
\mathfrak P(A[\![\hbar]\!], \eta_0)\ar@{.>}[r]^-{\simeq}&\mathfrak D(A[\![\hbar]\!],\eta_0),
}
$$
where
the dotted arrow means the quasiisomorphism
is given by a sequence of (roofs of) $L_\infty$-morphisms.

As a corollary, the Maurer-Cartan elements
of $\mathfrak P(A[\![\hbar]\!], \eta_0)$
 (up to gauge equivalence)
are in one-to-one correspondence,
via the above $L_\infty$-quasiisomorphisms,
with the Maurer-Cartan elements of
$\mathfrak D(A[\![\hbar]\!],\eta_0)$.
In particular, by Propositions
\ref{altdefofunimodularity} (1) and \ref{altdefofCY} (1),
if $A$ is unimodular Poisson, then $A_{\hbar}$ is Calabi-Yau, and vice versa.
\end{proof}

\begin{proof}[Proof of Theorem \ref{thm:fifttheorem} (1)]
It is proved by Calaque and Rossi in \cite[Theorem 6.1]{CR}
that we have a commutative diagram
\begin{equation}\label{Liemoduleactionsarecompatible}
\xymatrixcolsep{4pc}
\xymatrix{
\mathfrak X^\bullet(A[\![\hbar]\!])\ar@{~>}[r]\ar[d]^{\mathfrak U}_{\simeq}&\Omega^\bullet(A[\![\hbar]\!])\\
\bar{\mathrm{C}}^\bullet(A[\![\hbar]\!])\ar@{~>}[r]
&\bar{\mathrm{C}}_\bullet(A[\![\hbar]\!]),\ar[u]_{\mathfrak S}^{\simeq}
}
\end{equation}
where the horizontal curved arrows mean the cap product.
Since $A$ is unimodular Poisson, $A_\hbar$ is Calabi-Yau,
and $\mathfrak S$ maps the volume form of $A[\![\hbar]\!]$ to the
volume form of $A_\hbar$ under the Hochschild-Kostant-Rosenberg
map, and we thus obtain
the following commutative diagram
$$
\xymatrixcolsep{4pc}
\xymatrix{
\mathrm{HP}^\bullet(A[\![\hbar]\!])\ar[r]^-{\cong}\ar[d]^{\cong}&\mathrm{HP}_{n-\bullet}(A[\![\hbar]\!])
\ar[d]^{\cong}
\\
\mathrm{HH}^\bullet(A_{\hbar})\ar[r]^-{\cong}&\mathrm{HH}_{n-\bullet}(A_{\hbar})}
$$
by Theorem \ref{thm:Xu} and the noncommutative Poincar\'e duality \eqref{thm:dTdVVdB}.
\end{proof}

\subsection{Deformation quantization of Frobenius Poisson algebras}

We first rephrase Kontsevich's Cyclic Formality Conjecture for {\it cochains},
published in
Felder-Shoihket \cite[\S 1]{FS},
in the case $k^{0|n}$.
Note that in this case, the space of functions $\mathscr O(k^{0|n})\cong A^!:=\Lambda^\bullet(\xi_1,\cdots,\xi_n)$.

Recall that by Cattaneo and Felder \cite[Appendix]{CF},
Kontsevich's $L_\infty$-quasiisomorphism holds for the supermanifold case.
Denote this quasiisomorphism again by $\mathfrak U$.
The following is stated in Felder-Shoikhet \cite{FS} and proved by
Willwacher-Calaque \cite[Theorem 2]{WC} (see also
\cite{FS} for some partial results):

\begin{lemma}[Formality for cochains]
For $A^!=\mathscr O(k^{0|n})\cong\Lambda^\bullet(\xi_1,\cdots,\xi_n)$,
there exists an $L_\infty$-quasiisomorphism of Lie modules
$$
\xymatrix{
\mathfrak V: (\mathfrak X^\bullet(A^![\![\hbar]\!]; A^{\ac}[\![\hbar]\!])[\![u]\!], ud^*)
\ar[r]^-{\simeq}&
(\mathrm{CC}^\bullet_{\lambda}(A^![\![\hbar]\!]), \delta).
}
$$
In other words,
there exists an $L_\infty$-quasiisomorphism of Lie algebras:
\begin{equation*}
\xymatrix{
\mathfrak U\times\mathfrak V:
\Sigma\mathfrak X^\bullet(A^![\![\hbar]\!])\ltimes 
\Sigma^{-1-n}\mathfrak X^\bullet(A^![\![\hbar]\!];A^{\ac}[\![\hbar]\!])[\![u]\!]\ar[r]^-{\simeq}
&
\Sigma\bar{\mathrm{C}}^\bullet(A^![\![\hbar]\!])\ltimes\Sigma^{-1-n}\mathrm{CC}^\bullet_{\lambda}(A^![\![\hbar]\!]).}
\end{equation*}
\end{lemma}

Again we recommend \cite[\S11]{dTdVVdB} for the formulas of the (Taylor) expansion
of $\mathfrak U\times\mathfrak V$.
Also, we mention that the first term of the above $L_\infty$-quasiisomorphism
$\mathfrak V$ is the Hochschild-Kostant-Rosenberg map, 
which then preserves the volume forms on each side.
Therefore, we get a quasiisomorphism
$$\mathfrak P^\circ(A^![\![\hbar]\!], \eta^!_0)
\simeq\mathfrak D^\circ(A^![\![\hbar]\!], \eta^!_0)$$ as DG Lie algebras.
As a corollary, we have the following theorem, due to
Felder-Shoikhet \cite[Corollary 1]{FS} and Willwacher-Calaque \cite[Theorem 37]{WC}:

\begin{theorem}\label{thm:FSWC}
For $A^!=\Lambda^\bullet(\xi_1,\cdots,\xi_n)$, the deformation quantization
of $A^!$ is symmetric Frobenius if and only if $A^!$ is unimodular Frobenius.
\end{theorem}

\begin{proof}[Proof of Theorem \ref{thm:fifttheorem} (2)]
Recall that
$\Omega^\bullet(A^![\![\hbar]\!])$ and $\bar{\mathrm C}_\bullet(A^![\![\hbar]\!])$
are Lie modules over $\mathfrak X^\bullet(A^![\![\hbar]\!])$ 
and $\bar{\mathrm C}^\bullet(A^![\![\hbar]\!])$
respectively.
Applying Calaque-Rossi's result \eqref{Liemoduleactionsarecompatible} to $A^![\![\hbar]\!]$, 
we have
the following commutative diagram
\begin{equation*}
\xymatrixcolsep{4pc}
\xymatrix{
\mathfrak X^\bullet(A^![\![\hbar]\!])\ar@{~>}[r]\ar[d]^{\mathfrak U}_{\simeq}
&\Omega^\bullet(A^![\![\hbar]\!])\\
\bar{\mathrm{C}}^\bullet(A^![\![\hbar]\!])\ar@{~>}[r]
&\bar{\mathrm{C}}_\bullet(A^![\![\hbar]\!]).\ar[u]_{\mathfrak S}^{\simeq}
}
\end{equation*}
Now consider the adjoint actions of the Lie algebras to the linear dual spaces of the Lie modules 
(see Remark \ref{Gerstenhabercomodule}),
we obtain the following commutative diagram
\begin{equation*}
\xymatrixcolsep{4pc}
\xymatrix{
\mathfrak X^\bullet(A^![\![\hbar]\!])\ar@{~>}[r]\ar[d]^{\mathfrak U}_{\simeq}
&\mathfrak X^\bullet(A^![\![\hbar]\!]; A^{\ac}[\![\hbar]\!])
\ar[d]^{\mathfrak S^*}_{\simeq}\\
\bar{\mathrm{C}}^\bullet(A^![\![\hbar]\!])\ar@{~>}[r]
&\bar{\mathrm{C}}^\bullet(A^![\![\hbar]\!]; A^{\ac}[\![\hbar]\!]).
}
\end{equation*}
Taking the homology in the above commutative diagram and applying the 
Poincar\'e duality, whose existence is guaranteed by Theorem \ref{thm:FSWC}, we obtain
the commutative diagram
$$
\xymatrixcolsep{4pc}
\xymatrix{\mathrm{HP}^\bullet(A^![\![\hbar]\!])\ar[r]^-{\cong}\ar[d]^{\cong}
&\mathrm{HP}^{\bullet-n}(A^![\![\hbar]\!]; A^{\ac}[\![\hbar]\!])\ar[d]^{\cong}\\
\mathrm{HH}^\bullet(A^!_{\hbar})\ar[r]^-{\cong}
&\mathrm{HH}^{\bullet-n}(A^!_{\hbar}; A^{\ac}_{\hbar}).
}
$$
This completes the proof.
\end{proof}


\begin{proof}[Proof of Theorem \ref{thm:fourththeorem}]
By Shoikhet \cite[Theorem 0.3]{Shoikhet}
(see also \cite[Theorem 8.6]{CFFR}), $A_{\hbar}$ and $A^{!}_{\hbar}$ are Koszul dual algebras over $k[\![\hbar]\!]$,
and hence the theorem follows from a combination of Theorems
\ref{thm:thirdtheorem},
\ref{thm:fifttheorem},
and \ref{thm:conjofRouquier}.
\end{proof}

\subsection{Twisted Poincar\'e duality for Poisson algebras}

For a general associative algebra, say $A$,
it may not be Calabi-Yau, and therefore there may not exist any
Poincar\'e duality between $\mathrm{HH}^\bullet(A)$ and $\mathrm{HH}_\bullet(A)$.
In \cite{BZ},
Brown and Zhang
introduced the so-called ``twisted Poincar\'e duality" for
associative algebras. That is, for such $A$,
keeping its left $A$-module structure (the multiplication)
as usual, the right $A$-module structure of $A$ is the multiplication
composed with an automorphism $\sigma:A\to A$.
Denote such $A$-bimodule by $A_{\sigma}$, then
Brown and Zhang showed that for a lot of algebras,
there exists a twisted Poincar\'e duality
$
\mathrm{HH}^\bullet(A)\cong\mathrm{HH}_{n-\bullet}(A; A_{\sigma})
$
for some $n\in\mathbb N$ (cf. \cite[Corollary 5.2]{BZ}).
In this case $A$ is called a {\it twisted Calabi-Yau} algebra of dimension $n$.

Such phenomenon also occurs for Poisson algebras. Namely, not
all Poisson algebras are unimodular, and hence there may not exist
an isomorphism between
$
\mathrm{HP}^\bullet(A)
$
and
$\mathrm{HP}_{\bullet}(A)$.
In \cite{LR07,LWW,ZVOZ,Zhu},
the authors studied the so-called twisted Poincar\'e duality for Poisson algebras,
similarly to that of associative algebras.
They also studied some comparisons with twisted Calabi-Yau algebras.
However, it would be very interesting
to study the relationships between the
deformation quantization of twisted unimodular Poisson algebras
and twisted Calabi-Yau algebras, and obtain a theorem similar to
Theorem \ref{thm:fourththeorem}
in this twisted case.


\end{document}